\title{Grover's Algorithm and Many-Valued Quantum Logic}
\author{Samuel Hunt and Maximilien Gadouleau}
\date{\today}
\newcommand{\ket}[1]{\ensuremath{\left|#1\right\rangle}}
\newcommand{\bra}[1]{\ensuremath{\left\langle#1\right|}}
\newcommand{\braket}[2]{\ensuremath{\left\langle#1 | #2\right\rangle}}
\newcommand{\bramatket}[3]{\ensuremath{\left\langle#1 | #2 | #3\right\rangle}}
\newtheorem*{theorem*}{Theorem}
\newtheorem{theorem}{Theorem}[section]
\newtheorem{lemma}[theorem]{Lemma}
\begin{document}

  \maketitle

\begin{abstract}
\noindent As the engineering endeavour to realise quantum computers progresses, we consider that such machines need not rely on binary as their de facto unit of information. We investigate Grover's algorithm under a generalised quantum circuit model, in which the information and transformations can be expressed in any arity, and analyse the structural and behavioural properties while preserving the semantics; namely, searching for the unique preimage to an output a function. We conclude by demonstrating that the generalised procedure retains $O(\sqrt{N})$ time complexity.
\end{abstract}

  \section{Introduction} \label{section_introduction}

Lov Grover devised his eponymous algorithm in 1996 as a quantum search technique \cite{Grover}, permitting a quantum computer to derive the solution to a problem of size $N$ in $O(\sqrt{N})$ time in the worst-case. This is faster than any classical worst-case, which must sequentially evaluate all $N$ elements, and is thus an example of \textit{quantum supremacy}\footnote{This refers to a quantum process which solves a problem asymptotically faster than any classical counterpart.}; in fact, Grover's algorithm was the first to surely exhibit this property in an applicable context\footnote{The first to display quantum supremacy was the Deutsch-Jozsa algorithm, published in 1992 \cite{DJ}, of little practical use. Shor's algorithm was published in 1995 \cite{Shor}, showing that \texttt{Integer Factorisation} belongs to \texttt{BQP}. However, while is suspected this problem doesn't belong to either \texttt{P} or \texttt{BPP}, this is not yet known.}. Formulated precisely, the algorithm takes a function $f$ over a codomain $S$ and some output of significance $y \in f(S)=\left\{f(s):s\in S\right\}$, and finds the unique preimage $x$ such that $f(x)=y$ in at most $O(\sqrt{S})$ calls to the function $f$. The algorithm can be modified to work with multiple targets or conduct what Grover and Radhakrishnan call partial search, but we don't consider such variations.

We investigate Grover's algorithm within a generalisation of the \textit{quantum circuit model}, in which the arity of information and consequent transformations are not limited to binary. The model considers a unit of information with $d$-many distinct states and quantum transformations between these levels. The bit has remained the de facto unit of classical digital computing due to the electronic simplicity of the transistor, with some modern transistors residing just above the quantum scale. Essentially a switch, the transistor either permits or inhibits current flow, naturally endowing binary\footnote{The bit existed before electronics as punched-cards, invented in 1732; Shannon coined it the \textit{bit} in his seminal 1948 paper.}. On the other hand, no de facto quantum system has presently been engineered for computation, and while there is an endeavour to realise the qubit - the quantum counterpart to the bit -  quantum systems are not restricted in their arity by nature. In fact, as with quantum computation in general, the only limitation is our ability to control these systems. Physically, quantum states are just nominated energy levels within quanta, and with enough control over the system we can nominate and navigate through an arbitrary number of them, encapsulating any arity.

While current research is focused primarily on the physical realisation of the qudit [4-8], we feel that it is important to simultaneously develop our knowledge of how this technology can be utilised. Previous consideration has been given to algorithms within this generalisation [9-12], with results upon which we hope to build. The work in \cite{MultiAlg}, in particular, discusses Grover's algorithm, but lacks a thorough analysis of structure and behaviour, with important and informative details subsequently overlooked, such as the generalised operators' invariance under a $d$-dimensional subspace and the expected number of runs as $d$ increases. We present a rigorous and fully investigative first analysis of a useful quantum algorithm within this framework.

The paper is structured as follows. Section \ref{section_model} briefly highlights some relevant features of the quantum circuit model and formally introduces the many-valued generalisation. Section \ref{section_Grover} details a thorough treatment of Grover's algorithm in binary, proving the construction of the circuit and derivation of the time-complexity. Section \ref{section_d-ary} presents the work novel to this paper. We first explain our approach and its mathematical intuition. We provide the circuit and detail its linear algebraic structure. We then proceed to analyse its behavioural properties, such as an invariant subspace. We provide two final analyses for proving the desired result, that the generalised circuit performs the same function as its binary counterpart and retains sublinear time complexity: the first, mirroring precisely the binary method, in the ternary case; the second, following a different, `synthetic' approach in the general $d$-ary case. Finally, Section \ref{section_conclusion} summarises the results of this paper and considers where subsequent work in this field might be directed.

  \section{The Many-Valued Quantum Circuit Model} \label{section_model}
This section briefly explains the generalisation of the quantum circuit model presented in \cite{Arity}. Before proceeding, we note that this is not the only model considered for implementable quantum computation\footnote{Other notable models include the adiabatic quantum computer and topological quantum computer.}, but its prominence is due to its cooperative nature and familiarity to classical digital computing. Moreover, the quantum Turing machine has been precisely defined, constituting a universal model of quantum computation \cite{QTM}, and it can be shown that certain - although infinite - gate-sets do exhibit universality within quantum computing \cite{Circ1}, and that more restrictive and realistic gate sets permit the model to approximate universality with arbitrary precision \cite{Circ2}. Similar results exist for qudits; for example, the considered generalisations of the Pauli $X$ and $Z$ operators are universal for single-qudit operations \cite{UnivQudit}. Crucially, working within this framework permits us to utilise the full breadth of quantum computation and thus the exploration of this model and its extensions is important. Finally, this generalisation does not seem to have a set-in-stone name and so we refer to it as \textit{many-valued quantum logic}, in-line with \cite{Treatise}.

We assume the reader is familiar with quantum computing and the standard quantum circuit model, but highlight particular features of importance to this work; for a complete introduction, we recommend \cite{Gentle}, otherwise there is much material on the subject. Firstly, the normalisation of quantum states entails that they lie as points on the unit hypersphere and thus transformations between states detail rotations, which we can capture with linear algebra; these transformations are equivalent to unitary linear operators. Consequently, any single-qubit operator can be detailed by a $2\times2$ unitary, complex matrix. Following are some qubit operators of particular note, due to both their presence in the circuit for Grover's algorithm and fundamental status in quantum computing: firstly, the quantum $\text{NOT}$ gate; secondly, the quantum phase-shift gate; finally, the Hadamard gate i.e. binary quantum Fourier transform.
  \begin{align*}
X = \left(\begin{matrix} 0 & 1 \\ 1 & 0 \end{matrix}\right), Z = \left(\begin{matrix} 1 & 0 \\ 0 & -1 \end{matrix}\right), H = \frac{1}{\sqrt{2}}\left(\begin{matrix} 1 & 1 \\ 1 & -1 \end{matrix}\right)
  \end{align*}

Secondly, quantum entanglement  leads us to the construction of multi-qubit gates that cannot be expressed per single-qubit gates; such are our tools to instill this phenomenon. The most common multi-qubit gate is the controlled NOT gate, enacted in its most simple form across two qubits. In this case, its matrix is as follows.
  \begin{align*}
C_{\text{NOT}} = \left(\begin{matrix} I & 0 \\ 0 & X \end{matrix}\right)
  \end{align*}
We can think of the two qubits as being, individually, the \textit{control} and the \textit{target}. When applied, the target qubit is subject to the NOT gate only when the control is in the $1$ state. Let us briefly demonstrate that $C_{\text{NOT}}$ cannot be constructed per single-qubit gates. Recall that the matrix corresponding to the application of a gate-array is given by the tensor product of the composite gates. There are then no two matrices $U_1$ and $U_2$ such that $U_1 \otimes U_2 = C_{\text{NOT}}$, as we would require that both $U_1U_2=I$ and $U_1U_2=X$.

We combine meaningful sequences of these gates to achieve a particular task, with the sequence constituting a \textit{quantum circuit}; thus, we have the quantum circuit model realisation of a \textit{quantum algorithm}. The subject of this paper is Grover's quantum algorithm, which contains two important sub-circuits, the \textit{diffusion} and \textit{black-box} operators, with further auxiliary quantum components to permit easy extraction of the solution. We will address the algorithm in detail in Section $3$.

The generalisation defines discrete quantum states of any arity and generalises the gates in line with this more expressive unit of information. We first define this generalised unit, the \textit{qudit}. The extension is simple: rather than nominating two distinct quantum states between which we transverse, we nominate $d$-many and regard the system as inhabiting a $d$-dimensional vector space. The qudit is then specified as follows. Note that, we still mandate normalisation and prefer orthogonal base states, for the same reasons as in the binary case: direct extraction of probabilities and ease of measurement.
  \begin{align*}
\ket{\psi_d} = \sum_{i=0}^{d-1} a_i \ket{i} \ :\ \sum |a_i|^2 = 1
  \end{align*}

We now define the generalised operators. Abstractly, the  binary operators navigate qubits between different states and phases. We typically restrict binary operators to enact two phase-shifts, $1$ and $-1$, with these values nominated because they are maximally far apart on the unit circle, which can denote phase. Following this intuition in generalising to $d$ distinct phases for a $d$-ary system, we desire that these phases all be equally far apart. We achieve this per the $d$th roots of unity, complex numbers $z$ that satisfy $z^d=1$. We can define the solutions per trigonometry in which the primitive $d$th root is given as follows. All solutions can be obtained through raising the primitive to a power $k\in[0,d-1]$.
  \begin{align*}
\omega_d=\exp\left(\tfrac{2\pi i}{d}\right)
  \end{align*}
We drop the subscript when the dimension is implicit. Each set of $d$th roots form a cyclic group, distributed uniformly across the unit circle, satisfying precisely our qudit phase-shift requirements. To conclude, when considering discrete phase-shifts in a $d$-ary quantum logic, we use the $d$th roots of unity.

Consider the three $d$-ary gates that generalise our considered single-qubit binary gates: $X_d$, $Z_d$, and $F_d$. The NOT gate $X$ is generalised to the increment gate; the phase-shift gate $Z$ is generalised to apply one of $d$ phases depending on the qudit value; the binary quantum Fourier transform - Hadamard gate - $H$ is considered in its general case.
  \begin{align*}
X_d = \left(\begin{matrix} 0 & 0 & \ldots & 0 & 1 \\ 1 & 0 & \ldots & 0 & 0 \\ 0 & 1 & \ldots & 0 & 0 \\ \vdots & \vdots & \ddots & \vdots & \vdots \\ 0 & 0 & \ldots & 1 & 0\end{matrix}\right),\ Z_d = \left(\begin{matrix} 1 & 0 & 0 & \ldots & 0 \\ 0 & \omega & 0 & \ldots & 0 \\ 0 & 0 & \omega^2 & \ldots & 0 \\ \vdots & \vdots & \vdots & \ddots & \vdots \\ 0 & 0 & 0 & \ldots & \omega^{d-1}\end{matrix}\right), F_d = \left(\begin{matrix} 1 & 1 & 1 & \ldots & 1 \\ 1 & \omega & \omega^2 & \ldots & \omega^{d-1}  \\ 1 & \omega^2 & \omega^4 & \ldots & \omega^{2(d-1)}  \\ \vdots & \vdots & \vdots & \ddots & \vdots \\ 1 & \omega^{d-1} & \omega^{2(d-1)} & \ldots & \omega^{(d-1)(d-1)} \end{matrix}\right)
  \end{align*} 
As in the binary case, we must also define multi-qudit gates to utilise the phenomenon of quantum entanglement. For example, we could generalise the controlled NOT gate to the SUM gate: $\text{SUM}\ket{m}\ket{n}=\ket{m}\ket{m+n \mod d}$. As with the extended controlled NOT, we could then extend the SUM to operate over $n$-many qudits by adding to the target the product of the controls.  In Section $4$, we define an extension of the controlled NOT gate necessary to complete our generalisation of Grover's algorithm. Note, the dimension subscript of gates or operators is often dropped when it can be obviously inferred. We will typically highlight the dimension to which we refer, or that we are working in the general-case, at the start of a section and drop the subscript unless it is necessary throughout the remainder.

  \section{Grover's Algorithm in Binary} \label{section_Grover}
As previously stated, Grover's algorithm determines the unique preimage to a given output of a function over a codomain of size $N$, using $O(\sqrt{N})$ calls of the function. This is equivalent to finding the element of a space satisfying a given property faster than the worst-case, trial-and-error solution necessitated in classical computing. We can therefore think of Grover's algorithm as being able to brute force the solution to a problem faster than a classical computer. Crucially, the speedup is only $O(\sqrt{N})$ and can't be used to solve \texttt{NP} problems in polynomial time. The structure of importance is the codomain of the function, and in solving the problem we embed and manipulate this structure in the quantum world.

Thus, let us consider how Grover went about quantum brute-force over a search-space of size $N=2^n$. The algorithm begins by invoking a quantum superposition per applying the Hadamard gate across an $n$-qubit register set to state $\ket{0}$; the operator superposes the register into a state detailing each standard-basis vector with equal magnitude. Any computation rendered on the superposition is then rendered on each individual basis vector, i.e. on all $2^n$ elements. We then define a conditional phase-shift operator, enacted only when the argument matches a nominated value; or, equivalently, when the image of the argument under a function equals a nominated value. In Grover's algorithm, the argument is subjected to a phase shift of $-1$  when its state, under the function of note, matches the image that we are trying to reverse. In the superposition, this operator acts individually on each basis state, solely phase-shifting the standard-basis element  matching our image. We denote this the \textit{black-box operator}.

Now, it is important not to the fall into the trap of wondering why we can't immediately extract the preimage, given that we can instantly single it out. We know that the extraction of information from a quantum system is done through measurement, which collapses the system into a state orthogonal to the measurement axes. Moreover, the likelihood of an outcome is equal to its projected amplitude in the corresponding plane. Thus, to vary the outcome probabilities of a system under measurement, we must vary the relative amplitudes of the system apropos the measurement axes. Inverting the sign of a quantum state has no effect on its outcome under measurement, as its amplitude relative to any axis remains invariant. Subsequently, having singled out our desired preimage, we must employ some mechanism through which we increase its relative amplitude. Furthermore, any measurement will erase all information about the prior, non-reconstructible quantum system. Thus, when we make this measurement, we want to be sure that it will yield our desired result. These important details are summarised by Scott Aaronson as the number one thing to take away from his blog: ``Quantum computers would not solve hard search problems instantaneously by simply trying all the possible solutions at once'' \cite{Blog}.

Grover managed to derive an operator that increases the amplitude of the phase-segregated preimage while diminishing all other elements. Taking inspiration from the heat diffusion equation, detailing that heat flows from places where it is higher to places where it is lower, gradually averaging, he defined a transformation that reflected the amplitude of each basis-vector in a superposed state about the mean, denoted the \textit{diffusion operator}. After application of the phase-shift, the sequestered vector possesses negative amplitude, while all other elements have equal positive amplitude. Consequently, the mean amplitude is valued at just below the positive elements, and reflection about this diminishes the positive vectors while inverting and amplifying the preimage. However, this amplification is periodic; past a certain point, the reflection begins to diminish the preimage and amplify the remaining elements, until the cycle begins anew. We want to make a measurement when we will be sure that the outcome will be correct with high probability.

Formally, the algorithm initialises the system to the entangled state 
\[
\ket{s} = 1/\sqrt{N} \sum_{x=0}^{N-1} \ket{x} = H^{\otimes n} \ket{0}.
\]
The Grover operators are then applied $r$-many times, with each application denoted to be a single \textit{Grover Iteration}. Finally, a measurement is performed, collapsing the system onto the preimage $\ket{\tau}$ with probability approaching $1$ for $N \gg 1$. To guarantee this, we derive a value for $r$ that maximises the system's amplitude along the $\tau$ plane. In the remainder of this section, we prove a circuit construction of the diffusion operator and proceed to derive a value on $r$.

  \subsection{Circuit Construction} 
Consider that we can express the aforementioned operators as projections, denoting the black-box operator $U_\tau$ and the diffusion operator $U_s$.
  \begin{align*}
U_\tau = I - 2\ket{\tau}\bra{\tau}\ ,\  U_s = 2\ket{s}\bra{s} - I
  \end{align*}
The former dictates that the basis element corresponding to the primage $\tau$ is locally phase-shifted through $-1$, while the remaining basis elements are left unchanged; this can also be thought of as a reflection around $\ket{\tau}$. Similarly, the latter details a reflection about $\ket{s}$. To show this, consider that the state $\ket{\psi}=a\ket{s} + b\ket{s^\perp}$, in which $\perp$ denotes orthogonality, yields $U_s\ket{\psi}=a\ket{s} - b\ket{s^\perp}$ under the operator. We will see that in combining these operators together, reflecting the state around both $\tau$ and $s$, we edge closer to $\tau$. 

In our circuit construction, we present a circuit specifying the diffusion operator, but leave the former black-box operator as such; although its circuit construction over a considered function is of course known \cite{Oracle}. Firstly, note that we can re-express the diffusion operator per conjugation with the Hadamard gate.
  \begin{align*}
U_s = H^{\otimes n} \cdot \left( 2\ket{0}\bra{0} - I \right) \cdot H^{\otimes n}
  \end{align*}
This conjugation is easily constructed, per local application of the Hadamard across the register before and after the internal details. Thus, we are left to construct a circuit specifying $2\ket{0}\bra{0} - I$, which is accomplished through the circuit presented in Fig. \ref{BinNoConjH}.

  \begin{figure}[!h]\label{BDC2}
      \begin{center}
      \begin{tikzpicture}[thick]
          \tikzstyle{operator} = [draw,fill=white,minimum size=1.5em] 
          \tikzstyle{phase} = [draw,fill,shape=circle,minimum size=5pt,inner sep=0pt]
          \tikzstyle{cnot} = [draw,shape=circle,minimum size=8pt,inner sep=0pt]
          \tikzstyle{surround} = [fill=white!10,thick,draw=black,rounded corners=0mm]
          \matrix[row sep=0.4cm, column sep=0.5cm] (circuit) {
              \node[operator] (X12) {$X$}; &
              &
              \node[phase] (P14) {}; &
              &
              \node[operator] (X16) {$X$}; & \\ 

              \node[operator] (X22) {$X$}; &
              &
              \node[phase] (P24) {}; &
              &
              \node[operator] (X26) {$X$}; & \\

              \node[operator] (X32) {$X$}; &
              \node[operator] (H33) {$H$}; &
              \node[cnot] (P34) {}; &
              \node[operator] (H35) {$H$}; &
              \node[operator] (X36) {$X$}; & \\
          };
         \begin{pgfonlayer}{background}
           \draw[thick] (X12) -- (X16)  (X22) -- (X26) (X32) -- (X36) (P14) -- (P24) (P24) -- (P34);
         \end{pgfonlayer}
      \end{tikzpicture}
      \end{center}
      \caption{Quantum circuit realising Grover's diffusion operator $2\ket{s}\bra{s}-I$, without conjugation by the Hadamard; i.e. $2\ket{0}\bra{0} - I$. The circuit is given for 3 qubits but is generalised to $n$-many by `duplicating the top row upwards.'}
      \label{BinNoConjH}
  \end{figure}
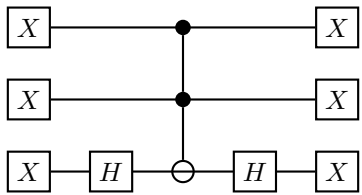

\begin{theorem}[Grover Diffusion Circuit]
  The circuit in Fig. \ref{BinNoConjH} realises $2\ket{0^n}\bra{0^n} - I_{2^n}$ with a negligible global phase shift of $-1$. 
\end{theorem}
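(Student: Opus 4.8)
The plan is to read the circuit left-to-right as a product of three layers and to recognise the middle layer, after simplification, as a reflection projector. First I would record the three layers. The outermost columns apply $X$ to every one of the $n$ qubits, contributing a factor $X^{\otimes n}$ on each side; the generalisation from $3$ qubits to $n$ ``by duplicating the top row'' only adds further control lines, each flanked by $X$ gates, so this description is arity-independent. The middle of the circuit is the $(n-1)$-fold controlled NOT gate — call it $C^{n-1}(X)$, with the top $n-1$ qubits as controls and the bottom qubit as target — conjugated on the target line alone by a Hadamard. Thus, reading matrix products right-to-left, the circuit realises
\[
X^{\otimes n} \cdot \left(I^{\otimes (n-1)} \otimes H\right) \cdot C^{n-1}(X) \cdot \left(I^{\otimes (n-1)} \otimes H\right) \cdot X^{\otimes n}.
\]

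Next I would simplify the middle three factors. Since $H$ is self-inverse and a one-line $2 \times 2$ computation gives $HXH = Z$, and since conjugating the ``do nothing'' branch of the control by $H$ gives $HIH = I$, the bracketed product collapses to the $(n-1)$-fold controlled $Z$ gate $C^{n-1}(Z)$. Writing $Z = I - 2\ket{1}\bra{1}$ and expanding the controlled gate over the control projector $\ket{1^{n-1}}\bra{1^{n-1}}$, one finds $C^{n-1}(Z) = I^{\otimes n} - 2\ket{1^n}\bra{1^n}$; that is, it phase-shifts the all-ones basis state by $-1$ and fixes every other basis state.

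Finally I would conjugate by $X^{\otimes n}$. Because $(X^{\otimes n})^2 = I^{\otimes n}$ and $X^{\otimes n}\ket{1^n} = \ket{0^n}$, we get $X^{\otimes n}\left(I^{\otimes n} - 2\ket{1^n}\bra{1^n}\right)X^{\otimes n} = I^{\otimes n} - 2\ket{0^n}\bra{0^n}$, and since $I^{\otimes n} - 2\ket{0^n}\bra{0^n} = -\left(2\ket{0^n}\bra{0^n} - I^{\otimes n}\right)$ this is exactly $2\ket{0^n}\bra{0^n} - I_{2^n}$ up to the global phase $-1$ asserted in the statement.

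I do not expect a real obstacle; the only delicate point is diagrammatic bookkeeping — keeping straight that the top lines are controls and the bottom line the target, and that the Hadamards sit only on the target line so that the control structure passes through the conjugation untouched. Everything else ($HXH = Z$, the expansion of the controlled-$Z$ as a projector, and $X^{\otimes n}\ket{1^n} = \ket{0^n}$) is routine.
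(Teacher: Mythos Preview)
Your argument is correct, but it takes a genuinely different route from the paper's. The paper proceeds by induction on $n$: it verifies the base case $n=1$ by direct $2\times 2$ matrix multiplication, and for the inductive step writes each of $X^{\otimes n}$, $I^{\otimes n-1}\otimes H$, and $C_X^n$ as a $2\times 2$ block matrix over $2^{n-1}$-dimensional blocks, multiplies them out, and observes that the result is block-diagonal with the $(n-1)$-qubit circuit in the top-left block and the identity in the bottom-right. You instead give a direct, closed-form argument: recognise $(I^{\otimes n-1}\otimes H)\,C_X^n\,(I^{\otimes n-1}\otimes H)$ as the multi-controlled $Z$ via $HXH=Z$, rewrite that gate as $I-2\ket{1^n}\bra{1^n}$, and then conjugate by $X^{\otimes n}$ to shift $\ket{1^n}$ to $\ket{0^n}$. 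Your approach is shorter and more transparent, since it exhibits the operator as a Householder reflection from the outset. The paper's recursive block-matrix method, though heavier in the binary case, is the template the authors reuse verbatim for the $d$-ary generalisation in Lemmas~\ref{lemma_11}--\ref{lemma_12}, where there is no clean analogue of $HXH=Z$ and the recursive structure is the natural tool.
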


Note, $0^n$ here denotes an $n$-qubit state in which all qubits are set to $0$ and thus we derive an operator of size $2^n \times 2^n$. However, we must first extend our definition of the controlled NOT gate to span $n$-many qubits, denoted $C_X^n$, in which the first $n-1$ are all controls and the final qubit is the target. The operator is defined such that the target is subjected to the $X$ gate if and only if all of the control bits are \ket{1}. As a base case, $C_X^1=X$. With this defined, we verify the circuit, encapsulated in the following linear algebra.
\begin{align*}
    2\ket{0^{n}}\bra{0^n} - I_{2^n} = (-1) \cdot X^{\otimes n} \cdot (I^{\otimes n-1} \otimes H) \cdot C_X^n \cdot (I^{\otimes n-1} \otimes H)  \cdot   X^{\otimes n}
\end{align*}

\begin{proof}
We prove this in a recursive manner, beginning with the base case on $n=1$.
  \begin{align*}
  \begin{split}
   X \cdot H \cdot X \cdot H \cdot X &=
      \left( \begin{matrix} 0 & 1 \\ 1 & 0 \\ \end{matrix} \right) \cdot
      \left( \begin{matrix} 1 & 1 \\ 1 & -1 \\ \end{matrix} \right) \cdot
      \left( \begin{matrix} 0 & 1 \\ 1 & 0 \\ \end{matrix} \right) \cdot
      \left( \begin{matrix} 1 & 1 \\ 1 & -1 \\ \end{matrix} \right) \cdot
      \left( \begin{matrix} 0 & 1 \\ 1 & 0 \\ \end{matrix} \right) \\
    &= \left( \begin{matrix} -1 & 0 \\ 0 & 1 \\ \end{matrix} \right) = (-1) \cdot \left(2\ket{0}\bra{0} - I\right) = (-1) \cdot \left(2\left(\begin{matrix} 1 & 0 \\ 0 & 0 \end{matrix}\right) - \left(\begin{matrix} 1 & 0 \\ 0 & 1 \end{matrix}\right)\right)
  \end{split}
  \end{align*} 

Let us now recursively define the operators under the tensor product: $X^{\otimes n}$, $I^{\otimes n-1} \otimes H$, and $C_X^n$, recalling that concurrent, local application of gates across a register is encapsulated as such.
  \begin{align*}
    X^{\otimes n} &= \left(\begin{matrix}
        0 &  X^{\otimes n-1} \\
         X^{\otimes n-1} & 0 \\
      \end{matrix}\right), \;
    I^{\otimes n-1} \otimes H = \left(\begin{matrix}
         I^{\otimes n-2} \otimes H & 0 \\
         0 & I^{\otimes n-2} \otimes H \\
      \end{matrix}\right), \;
    C_X^n = \left(\begin{matrix}
         I & 0 \\
         0 & C_X^{n-1} \\
      \end{matrix}\right)
  \end{align*}
Substituting these into our circuit encapsulation, we obtain the following.
  \begin{multline*}
    X^{\otimes n} (I^{\otimes n-1} \otimes H) C_X^n (I^{\otimes n-1} \otimes H) X^{\otimes n} \\ =
      \left(
      \begin{matrix}
        X^{\otimes n-1} (I^{\otimes n-2} \otimes H) C_X^{n-1} (I^{\otimes n-2} \otimes H) X^{\otimes n-1} & 0 \\
        0 & X^{\otimes n-1} (I^{\otimes n-2} \otimes H) (I^{\otimes n-2} \otimes H) X^{\otimes n-1} \\
      \end{matrix}
      \right) \label{bin_circuit_rec}
  \end{multline*}
The above details a bottom-up structure, from which it is clear that the circuit specifies $(-1)\cdot \left(2\ket{0^n}\bra{0^n}-I\right)$, as the base case guarantees the top-left entry recurses down to $-1$ and the remaining entries are the identity matrix, $I = X^{\otimes n-1} \cdot (I^{\otimes n-2} \otimes H) \cdot (I^{\otimes n-2} \otimes H) \cdot X^{\otimes n-1}$.
\end{proof}

To reiterate, the circuit in Fig. 1 has a global phase shift of $-1$ when compared to the initially considered projection $2\ket{0}\bra{0} - I$, but this is of no consequence to the behaviour of the operator. Thus, we conclude our circuit construction for the diffusion operator by factoring in the register conjugation through the Hadamard, as detailed in Fig. 2.

  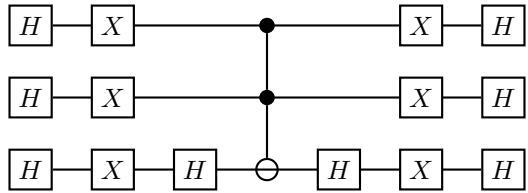
\begin{figure}[!ht]\label{BDC1}
      \begin{center}
      \begin{tikzpicture}[thick]
          \tikzstyle{operator} = [draw,fill=white,minimum size=1.5em] 
          \tikzstyle{phase} = [draw,fill,shape=circle,minimum size=5pt,inner sep=0pt]
          \tikzstyle{cnot} = [draw,shape=circle,minimum size=8pt,inner sep=0pt]
          \tikzstyle{surround} = [fill=white!10,thick,draw=black,rounded corners=0mm]
          \matrix[row sep=0.4cm, column sep=0.5cm] (circuit) {
              \node[operator] (H11) {$H$}; &
              \node[operator] (X12) {$X$}; &
              &
              \node[phase] (P14) {}; &
              &
              \node[operator] (X16) {$X$}; & 
              \node[operator] (H17) {$H$}; & \\

              \node[operator] (H21) {$H$}; &
              \node[operator] (X22) {$X$}; &
              &
              \node[phase] (P24) {}; &
              &
              \node[operator] (X26) {$X$}; &
              \node[operator] (H27) {$H$}; & \\

              \node[operator] (H31) {$H$}; &
              \node[operator] (X32) {$X$}; &
              \node[operator] (H33) {$H$}; &
              \node[cnot] (P34) {}; &
              \node[operator] (H35) {$H$}; &
              \node[operator] (X36) {$X$}; &
              \node[operator] (H37) {$H$}; & \\
          };
         \begin{pgfonlayer}{background}
           \draw[thick] (H11) -- (H17)  (H21) -- (H27) (H31) -- (H37) (P14) -- (P24) (P24) -- (P34);
         \end{pgfonlayer}
      \end{tikzpicture}
      \end{center}
      \caption{Quantum circuit detailing the binary Grover diffusion operator, $2\ket{s}\bra{s} - I$. The circuit is again presented for 3 qubits but can be extended as detailed in Fig. 1.}
      \label{Fig. 1}
  \end{figure}

  \subsection{Behavioural Analysis}
With the diffusion and black-box operators specified, we now consider the full computation of Grover's algorithm. As previously stated, alongside some auxiliary quantum components, the algorithm is achieved through iteratively applying the black-box then diffusion operators. Through their repeated application, the amplitude of the $\tau$ basis element periodically magnifies and diminishes. We want to perform a measurement after $r$-many iterations such that $r$ maximises the probability of the measurement collapsing the system into $\tau$. We will see that we can express $r$ as a function on $N$, and that this function details an algorithm asymptotically faster than anything achievable on a classical computer. The entire computation can be seen in Fig. 3.

\begin{figure}[h!]
  \begin{center}
  \includegraphics[scale=0.75]{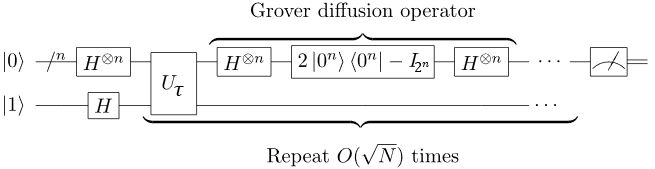}
  \end{center}
  \caption{The full circuit for Grover's algorithm in binary.}
  \label{fullcircuit}
\end{figure}

Before considering how we derive such a value, let us note that due to the system being initialised to state $\ket{s}$ and nature of the projection operators, we remain in a linear combination of $\ket{s}$ and $\ket{\tau}$ throughout computation. Formally, defining the Grover Subspace $G=\langle \ket{s},\ket{\tau} \rangle$, $G$ forms an invariant subspace under the operators $U_s$ and $U_\tau$.
 \begin{align*}
U_s \ket{s} = \ket{s},\ U_s \ket{\tau} = (2/\sqrt{N})\ket{s} - \ket{\tau} \\
U_\tau \ket{s} = \ket{s} - (2/\sqrt{N})\ket{\tau},\ U_\tau \ket{\tau} = -\ket{\tau}
  \end{align*}
Thus, we can consider the action of $U_s$ and $U_\tau$ in the space $G$ as follows.
  \begin{align*}
U_s &: a\ket{\tau} + b\ket{s} \mapsto (\ket{\tau} \ket{s}) \left( \begin{matrix} -1 & 0 \\ 2/\sqrt{N} & 1 \\ \end{matrix} \right) \left( \begin{matrix} a \\ b \\ \end{matrix} \right) \\
U_\tau &: a\ket{\tau} + b\ket{s} \mapsto (\ket{\tau} \ket{s}) \left( \begin{matrix} -1 & -2/\sqrt{N} \\ 0 & 1 \\ \end{matrix} \right) \left( \begin{matrix} a \\ b \\ \end{matrix} \right) 
  \end{align*}
Similarly, we can detail the action of applying both operators sequentially in the Grover Subspace. We highlight that, while circuits proceed left-to-right, application of their matrices follows right-to-left.
  \begin{equation*}
U_s U_\tau = \left( \begin{matrix} -1 & 0 \\ 2/\sqrt{N} & 1 \\ \end{matrix} \right) \left( \begin{matrix} -1 & -2/\sqrt{N} \\ 0 & 1 \\ \end{matrix} \right) = \left( \begin{matrix} 1 & 2/\sqrt{N} \\ -2/\sqrt{N} & 1-4/N \\ \end{matrix} \right)
  \end{equation*}
We then diagonalise the above, after which we can easily reason about the behaviour of $r$ iterations. In doing so, we define $\sin(t)=1/\sqrt{N}$.
  \begin{equation*}
U_s U_\tau = M \left( \begin{matrix} \exp({2it}) & 0 \\ 0 & \exp({-2it}) \\ \end{matrix} \right) M^{-1} \text{ where } M = \left( \begin{matrix} -i & i \\ \exp({it}) & \exp({-it}) \\ \end{matrix} \right) 
  \end{equation*}
As indicated, we can now encapsulate $r$-many applications as follows.
  \begin{equation*}
{(U_s U_\tau)}^{r} = M \left( \begin{matrix} \exp({2rit}) & 0 \\ 0 & \exp({-2rit}) \\ \end{matrix} \right) M^{-1}
  \end{equation*}

Recall once more that we initialise the system to $\ket{s}$. We conclude by analysing the system following $r$ applications of our Grover operators to this initial state, projected onto the $\tau$ plane, in which the amplitude squared yields the probability of the system collapsing onto the plane under measurement.
  \begin{align*}
\left| \left(\begin{matrix} \braket{\omega}{\omega} \\ \braket{\omega}{s} \end{matrix}\right)^T (U_s U_\tau)^r \left(\begin{matrix} 0 \\ 1 \end{matrix}\right) \right|^2 = \sin^2\left((2r+1)t\right)
  \end{align*}
The probability is maximised when $(2r+1)t=\tfrac{\pi}{2}$, which we approximate to be $r=\tfrac{\pi}{4t}$. Recalling that $t=\text{arcsin}(\tfrac{1}{\sqrt{N}})$ and applying the small-angle approximation, we get that $r\approx \tfrac{\pi}{4}\sqrt{N} = \Theta(\sqrt{N})$. One can further show that the observation yields the correct answer with error $O(\tfrac{1}{N})$, but we omit such details. To conclude, we have derived that Grover's algorithm obtains the correct solution with high probability, and that the necessary number of iterations grows asymptotically as fast as $\sqrt{N}$.

\begin{theorem}[Grover Time-Complexity]
Grover's algorithm finds the unique preimage to a given output of a function, with high probability, in time $O(\sqrt{N})$ in the binary case.
\end{theorem}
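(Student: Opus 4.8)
The plan is to assemble the pieces established in the behavioural analysis above into one argument. First I would recall the global structure: the system is initialised to $\ket{s} = H^{\otimes n}\ket{0}$ and the composite Grover iteration $U_s U_\tau$ is applied $r$ times before measurement. Since the Grover subspace $G = \langle \ket{s}, \ket{\tau}\rangle$ is invariant under both $U_s$ and $U_\tau$ (verified above on the spanning vectors), the entire evolution stays inside this two-dimensional space, so it suffices to iterate the $2\times 2$ matrix of $U_s U_\tau$ restricted to $G$. From the diagonalisation $U_s U_\tau = M\,\mathrm{diag}(e^{2it},e^{-2it})\,M^{-1}$ with $\sin t = 1/\sqrt{N}$ one obtains $(U_s U_\tau)^r = M\,\mathrm{diag}(e^{2rit},e^{-2rit})\,M^{-1}$, and projecting $(U_s U_\tau)^r\ket{s}$ onto $\ket{\tau}$ gives a probability of collapsing onto the preimage equal to $\sin^2\!\bigl((2r+1)t\bigr)$.

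Second, I would choose $r$ to drive this probability to nearly $1$. The expression equals $1$ exactly when $(2r+1)t = \pi/2$, i.e. at the non-integer optimum $r^\star = \pi/(4t) - 1/2$; taking $r$ to be the nearest integer gives $|(2r+1)t - \pi/2| \le t$, hence $\sin^2\!\bigl((2r+1)t\bigr) \ge \cos^2 t = 1 - 1/N$, so the measurement returns $\tau$ with failure probability $O(1/N)$. Converting the iteration count to a running time, $t = \arcsin(1/\sqrt N) = 1/\sqrt N + O(N^{-3/2})$ by the small-angle expansion, so $r = \tfrac{\pi}{4}\sqrt N + O(1) = \Theta(\sqrt N)$; since each iteration performs exactly one query to the black-box $U_\tau$ together with the diffusion circuit (which by the Grover Diffusion Circuit theorem uses only $O(n) = O(\log N)$ elementary gates), the algorithm costs $\Theta(\sqrt N)$ oracle calls and $O(\sqrt N \log N)$ gates overall — in particular time $O(\sqrt N)$, which is the claim.

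The only non-routine point is the rounding: one must check that replacing the real optimum $\pi/(4t) - 1/2$ by an integer costs at most $O(1/N)$ in success probability rather than a constant, which is precisely where the bound $|(2r+1)t - \pi/2|\le t$ and the fact that $t \to 0$ as $N \to \infty$ are used; for the finitely many small values of $N$ (where $t$ is not small) the asymptotic statement is vacuous, so those cause no difficulty. Finally I would remark that this argument isolates the leading constant $\pi/4$ and, through the form $\sin^2\!\bigl((2r+1)t\bigr)$, exhibits the periodic over-rotation phenomenon — a feature that will be relevant when the same analysis is carried out for the $d$-ary circuit in the next section.
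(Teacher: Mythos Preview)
Your proposal is correct and follows essentially the same route as the paper: invariance of the two-dimensional Grover subspace, diagonalisation of $U_s U_\tau$ with $\sin t = 1/\sqrt{N}$, the resulting success probability $\sin^2\!\bigl((2r+1)t\bigr)$, and the small-angle estimate $r \approx \tfrac{\pi}{4}\sqrt{N}$. You go slightly beyond the paper by making the integer-rounding and $O(1/N)$ error bound explicit (the paper merely asserts this and omits the details) and by noting the gate cost, but the underlying argument is identical.
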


  \section{A Many-Valued Generalisation} \label{section_d-ary}
In this section, we generalise Grover's algorithm to $d$-ary quantum circuits. We first address the structure of the generalised diffusion circuit, deriving a single, concise expression as in the binary case.  We continue to show that this operator is invariant under a generalised Grover Subspace. The black-box operator is left as such and simply defined with respect to this subspace: its inner-workings as a circuit are disregarded. Finally, we analyse the behaviour of the composed operators. We initially derive the time complexity in the ternary case and conjecture a generalisation of this result. However, when obtaining the final, general-case result, we follow a different approach, with justification detailed later.

Two ideas come to mind when considering how to address the generalisation: recreate the binary matrix or embrace the different natures of the various arities while preserving the abstract process followed by the algorithm. The former would likely be very complicated, as one would have to eradicate all the complex values that seep in via the $d$th roots of unity for $d>2$, and end with entries of solely $\pm1$. Thus, we follow the second approach, which we can achieve per a gate-by-gate generalisation, the circuit of which is given in Fig. 4. Intuitively, the semantics are preserved because the gate mechanisms remain invariant: a controlled phase shift conditional on the qudits being of a particular value - specified by the increment gate - all conjugated by the quantum Fourier transform.

  \begin{figure}[h!]
      \begin{center}
      \begin{tikzpicture}[thick]
          \tikzstyle{operator} = [draw,fill=white,minimum size=1.5em] 
          \tikzstyle{phase} = [draw,fill,shape=circle,minimum size=5pt,inner sep=0pt]
          \tikzstyle{cnot} = [draw,shape=circle,minimum size=8pt,inner sep=0pt]
          \tikzstyle{surround} = [fill=white!10,thick,draw=black,rounded corners=0mm]
          \matrix[row sep=0.4cm, column sep=0.5cm] (circuit) {
              \node[operator] (H11) {$F$}; &
              \node[operator] (X12) {$X$}; &
              &
              \node[phase] (P14) {}; &
              &
              \node[operator] (X16) {{$X^{-1}$}}; & 
              \node[operator] (H17) {{$F^{-1}$}}; & \\

              \node[operator] (H21) {$F$}; &
              \node[operator] (X22) {$X$}; &
              &
              \node[phase] (P24) {}; &
              &
              \node[operator] (X26) {{$X^{-1}$}}; &
              \node[operator] (H27) {{$F^{-1}$}}; & \\

              \node[operator] (H31) {$F$}; &
              \node[operator] (X32) {$X$}; &
              \node[operator] (H33) {$F$}; &
              \node[operator] (C34) {$C_X$}; &
              \node[operator] (H35) {{$F^{-1}$}}; &
              \node[operator] (X36) {{$X^{-1}$}}; &
              \node[operator] (H37) {{$F^{-1}$}}; & \\
          };
         \begin{pgfonlayer}{background}
           \draw[thick] (H11) -- (H17)  (H21) -- (H27) (H31) -- (H37) (P14) -- (P24) (P24) -- (C34);
         \end{pgfonlayer}
      \end{tikzpicture}
      \end{center}
      \caption{A gate-by-gate $d$-ary generalisation of Grover's diffusion operator. The circuit is again presented for 3 qudits, but can be extended as highlighted in Fig. 1}
      \label{Fig. 2}
  \end{figure}
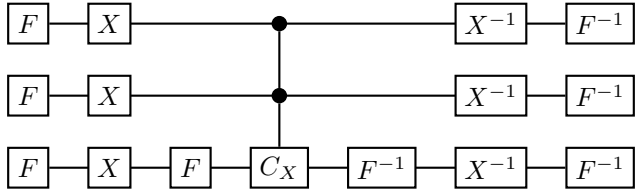

  \subsection{Structural Properties of the Generalised Diffusion Operator}
Thus, we proceed to analyse the structure of the circuit in Fig. 4. Due to the nature of the tensor product, we follow recursive proofs with prior base cases. We initially disregard the conjugation by the quantum Fourier transform, henceforth QFT, then factor it in later. We work with respect to the general dimension $d$ in all derivations. The intuition behind this first proof is that the matrix yielded by the binary operators $X \cdot H \cdot X \cdot H \cdot X$ in Lemma \ref{lemma_11} details the roots of unity descending its diagonal, which we can generalise to the following.
  \begin{equation*}
  \left(\begin{matrix}
         \omega^{d-1} & 0 & 0 & & 0 \\
         0 & \omega^{d-2} & 0 & \ldots & 0 \\
         0 & 0 & \omega^{d-3} & & 0 \\
          & \vdots & & \ddots & \vdots \\
         0 & 0 & 0 & \ldots & 1 \\
  \end{matrix}\right) = \omega^{d-1} Z^{-1}
  \end{equation*}

\begin{lemma} \label{lemma_11}
$X^{-1} F^{-1} X F X = \omega^{d-1} Z^{-1}$.
 \end{lemma}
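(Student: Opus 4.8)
The plan is to carry out every computation in the standard basis $\{\ket{0},\dots,\ket{d-1}\}$, using the actions $X\ket{k}=\ket{k+1 \bmod d}$, $Z\ket{k}=\omega^{k}\ket{k}$, and $F\ket{k}=\tfrac{1}{\sqrt{d}}\sum_{j=0}^{d-1}\omega^{jk}\ket{j}$. Since $F$ and $F^{-1}$ each occur exactly once in the product, the normalisation scalar $\tfrac{1}{\sqrt{d}}$ cancels and may be ignored throughout.

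The heart of the argument is the conjugation identity
\begin{equation*}
F^{-1} X F = Z^{-1},
\end{equation*}
which expresses that the QFT diagonalises the cyclic shift. I would prove it by evaluating $XF$ on a basis vector: applying $X$ termwise to $F\ket{k}$ and then substituting $m=j+1 \bmod d$ (legitimate because $\omega^{d}=1$, so the wrap-around contributes no extra term) gives $XF\ket{k}=\omega^{-k}\sum_{m}\omega^{mk}\ket{m}=\omega^{-k}F\ket{k}=FZ^{-1}\ket{k}$. As this holds for every $k$, we obtain $XF=FZ^{-1}$, hence $F^{-1}XF=Z^{-1}$.

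With this in hand the remainder is bookkeeping. Grouping the product as $X^{-1}F^{-1}XFX = X^{-1}\bigl(F^{-1}XF\bigr)X = X^{-1}Z^{-1}X$, I would evaluate $X^{-1}Z^{-1}X$ on $\ket{k}$:
\begin{equation*}
X^{-1}Z^{-1}X\ket{k} = X^{-1}Z^{-1}\ket{k+1} = \omega^{-(k+1)}\,X^{-1}\ket{k+1} = \omega^{-1}\,\omega^{-k}\ket{k},
\end{equation*}
so $X^{-1}Z^{-1}X = \omega^{-1}Z^{-1} = \omega^{d-1}Z^{-1}$, using $\omega^{-1}=\omega^{d-1}$. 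This is exactly the matrix displayed immediately before the lemma, with the descending powers $\omega^{d-1},\omega^{d-2},\dots,1$ down the diagonal; and at $d=2$ it reduces to $\operatorname{diag}(-1,1)$, matching the binary base case $XHXHX$ computed earlier.

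The only genuinely delicate point is the index substitution in the conjugation identity: one must track the modular arithmetic carefully and observe that the wrap-around introduces no extra phase precisely because $\omega$ is a $d$th root of unity. Everything else is a direct basis-state calculation. An alternative route — expanding $F$, $F^{-1}$, $X$, $Z$ as explicit matrices and multiplying them out — is available but considerably messier, and I would avoid it.
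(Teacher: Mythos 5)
Your proposal is correct and takes essentially the same route as the paper: both isolate the inner conjugation $F^{-1}XF = Z^{-1}$ first, and then observe that the outer conjugation by $X$ cyclically permutes the diagonal entries, contributing the extra factor $\omega^{-1}=\omega^{d-1}$. The only difference is one of technique rather than substance --- you establish the inner identity via the intertwining relation $XF = FZ^{-1}$ on basis vectors (a clean index shift needing no orthogonality argument), whereas the paper computes the entries $\tfrac{1}{d}\sum_{k}\omega^{k(j-i)-j}=\delta_{i,j}\,\omega^{-j}$ directly from the orthogonality of the $d$th roots of unity; both are sound.
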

  \begin{proof}
We can define the entries of the QFT and its inverse per $f_{i,j}=\tfrac{1}{\sqrt{d}}\omega^{ij}$ and $f_{i,j}^\dag = \tfrac{1}{\sqrt{d}}\omega^{-ij}$, respectively, indexing from $0$. Firstly, consider the internal conjugation, $F^{-1}XF$. Denoting $M = X F$, we have that $m_{i,j} = \tfrac{1}{\sqrt{d}}\omega^{(i-1)j}$ as $X_d$ specifies a row shift upwards, and denoting $N = F^{-1} M$, we have that $n_{i,j} = \sum_{k=0}^{d-1} f_{i,k}^{\dag} \cdot m_{k,j}$.
  \begin{align*}\begin{split}
n_{i,j} &= \sum_{k=0}^{d-1} f_{i,k}^{\dag} \cdot m_{k,j} \\
&= \tfrac{1}{\sqrt{d}}^2 \sum_{k=0}^{d-1} \omega^{-ik + (k-1)j} \\
&= \tfrac{1}{d} \sum_{k=0}^{d-1} \omega^{k(j-i)-j} \\
&=  \delta_{i,j} \cdot \omega^{-j}
  \end{split}\end{align*}
$\delta_{i,j}$ denotes the Kronecker delta, a function on two variables $i$ and $j$, defined to be $1$ when $i=j$ and $0$ when $i \neq j$. As we indexed from $0$, $N$ details a diagonal matrix whose descending elements are $\omega^{-0}$, $\omega^{-1}$, $\omega^{-2}$, etc. Recalling that the $d$th roots of unity form a cyclic group in which $-i \equiv d-i \ (\text{mod } d)$, the descending elements are equivalent to $\omega^0=1$, $\omega^{d-1}$, $\omega^{d-2}$, etc. The external conjugation, $X^{-1}NX$, then specifies both a row shift upwards and column shift leftwards, subsequently ordering the diagonal entries from $\omega^{d-1}$ to $1$. The resultant matrix is equivalent to $\omega^{d-1} Z^{-1}$.
  \end{proof}

We now take the circuit from Lemma \ref{lemma_11} and extend it over $n$-many qudits, generalising the Grover Diffusion Circuit Theorem from Section $3$. However, we must first provide a generalisation of the controlled NOT gate. Remembering that the common generalisation of the NOT gate is the INC gate, let us consider a controlled INC, $C_{X,d}^n$, defined such that the target qudit is incremented by $1 \ (\text{mod } d)$ only when all control qudits are value $1$. This matrix can be recursively defined as follows, in which $C_{X,d}^1=X_d$.
  \begin{equation*}
C_{X,d}^n = \left( \begin{matrix}
    I & 0 & 0 & \ldots & 0 \\
    0 & C_{X,d}^{n-1} & 0 & \ldots & 0 \\
    0 & 0 & I & \ldots & 0 \\
    \vdots & \vdots & \vdots & \ddots & \vdots \\
    0 & 0 & 0 & \ldots & I \\
\end{matrix} \right)
  \end{equation*}
As with the other gates, we henceforth omit the $d$ subscript from the controlled INC.

\begin{lemma} \label{lemma_12}
The circuit $({X^{-1}})^{\otimes n} \cdot (I^{\otimes n-1} \otimes F^{-1}) \cdot C_{X}^n \cdot (I^{\otimes n-1} \otimes F) \cdot X^{\otimes n}$ yields a block diagonal matrix, composed of $d \times d$ submatrices, with the topmost-left entry detailing the result from Lemma \ref{lemma_11} and the remaining diagonal entries specifying the identity matrix. 
\end{lemma}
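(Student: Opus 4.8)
The plan is to mirror the recursive strategy used for the Grover Diffusion Circuit Theorem, with Lemma~\ref{lemma_11} supplying the base case $n=1$. Write $\Gamma_n$ for the circuit $({X^{-1}})^{\otimes n} (I^{\otimes n-1} \otimes F^{-1}) C_{X}^n (I^{\otimes n-1} \otimes F) X^{\otimes n}$ of the statement. The goal is to establish, by induction on $n$, the recursion $\Gamma_n = \Gamma_{n-1} \oplus I_{d^{n-1}} \oplus \cdots \oplus I_{d^{n-1}}$ (with $d-1$ copies of $I_{d^{n-1}}$); that is, $\Gamma_n$ is block diagonal with $\Gamma_{n-1}$ in the top-left $d^{n-1} \times d^{n-1}$ block and identities elsewhere on the diagonal. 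Unfolding this identity down to the base case $\Gamma_1 = X^{-1}F^{-1}XFX = \omega^{d-1} Z^{-1}$ of Lemma~\ref{lemma_11} then gives precisely the claimed form: a block-diagonal matrix of $d \times d$ blocks, with $\omega^{d-1}Z^{-1}$ in the topmost-left position and $I_d$ in every other diagonal position.

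First I would record the block decompositions obtained by splitting off the first of the $n$ qudits. Since $X_d$ is the cyclic shift $\ket{i} \mapsto \ket{i+1 \bmod d}$, the matrix $X^{\otimes n} = X_d \otimes X^{\otimes n-1}$ is the $d \times d$ array of blocks carrying $X^{\otimes n-1}$ in each position $(i+1 \bmod d,\, i)$ and zero elsewhere; dually, $({X^{-1}})^{\otimes n} = (X^{\otimes n})^{-1}$ carries $({X^{-1}})^{\otimes n-1}$ in positions $(i,\, i+1 \bmod d)$. The factors $I^{\otimes n-1} \otimes F^{\pm 1}$ are block diagonal, with $d$ copies of $I^{\otimes n-2} \otimes F^{\pm 1}$; and, by the recursive definition of the controlled INC stated above, $C_X^n$ is block diagonal with $I_{d^{n-1}}$ in every diagonal position except position $1$, which carries $C_X^{n-1}$.

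Next I would collapse the three central factors and then conjugate. Since conjugating the block-diagonal $C_X^n$ by the block-diagonal $I^{\otimes n-1} \otimes F$ acts position-by-position, the product $(I^{\otimes n-1} \otimes F^{-1}) C_X^n (I^{\otimes n-1} \otimes F)$ is block diagonal, equal to $I_{d^{n-1}}$ in every position except for $(I^{\otimes n-2} \otimes F^{-1})\, C_X^{n-1}\, (I^{\otimes n-2} \otimes F)$ in position $1$. It remains to conjugate this matrix by $X^{\otimes n}$, which I would factor as the commuting product $(X_d \otimes I_{d^{n-1}})(I_d \otimes X^{\otimes n-1})$: conjugating by $I_d \otimes X^{\otimes n-1}$ conjugates each diagonal block by $X^{\otimes n-1}$, while conjugating by $X_d \otimes I_{d^{n-1}}$ cyclically permutes the diagonal blocks and, in particular, moves the single non-identity block from position $1$ to position $0$. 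The net effect is that $\Gamma_n$ is block diagonal with $I_{d^{n-1}}$ in positions $1, \dots, d-1$ and with $({X^{-1}})^{\otimes n-1}(I^{\otimes n-2} \otimes F^{-1})\, C_X^{n-1}\, (I^{\otimes n-2} \otimes F)\,X^{\otimes n-1} = \Gamma_{n-1}$ in position $0$, which is exactly the asserted recursion. (The same step can be done by brute-force multiplication of $d \times d$ block matrices as in the binary proof, but the conjugation picture avoids writing the product out.)

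I expect the main obstacle to be precisely this last bookkeeping. Unlike $X_2$, the gate $X_d$ is a nontrivial cyclic shift and is not its own inverse, so one must track carefully how the left factor $({X^{-1}})^{\otimes n}$ and the right factor $X^{\otimes n}$ cooperate: one needs the lone non-identity block to land in the top-left corner and to be conjugated by exactly $X^{\otimes n-1}$ (so that it becomes $\Gamma_{n-1}$, rather than some other conjugate or a translate of it), and one needs the $({X^{-1}})^{\otimes n-1}$ and $X^{\otimes n-1}$ factors that appear on the off-diagonal blocks to cancel to the identity. Getting the direction and magnitude of the block-shift induced by $X_d$ right is the one genuinely new ingredient compared with the binary argument; the rest is routine block arithmetic, closed off by the induction.
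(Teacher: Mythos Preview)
Your proposal is correct and follows essentially the same recursive strategy as the paper: collapse the inner conjugation $(I^{\otimes n-1}\otimes F^{-1})\,C_X^n\,(I^{\otimes n-1}\otimes F)$ to a block-diagonal matrix with a single non-identity block in position $1$, then use the outer conjugation by $X^{\otimes n}$ to shift that block to position $0$ while conjugating each block by $X^{\otimes n-1}$, yielding the recursion $\Gamma_n=\Gamma_{n-1}\oplus I_{d^{n-1}}\oplus\cdots\oplus I_{d^{n-1}}$. Your factorisation $X^{\otimes n}=(X_d\otimes I_{d^{n-1}})(I_d\otimes X^{\otimes n-1})$ makes the shift-plus-conjugation step a bit more explicit than the paper's one-line remark, but the argument is otherwise identical.
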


This generalises the binary diffusion operator $2\ket{s}\bra{s}-I$, a priori conjugation by the Hadamard; i.e. $2\ket{0}\bra{0} - I$.

  \begin{proof}
We follow a recursive proof. Denote $\zeta_d(n)=({X_d^{-1}})^{\otimes n} \cdot (I^{\otimes n-1} \otimes F_d^{-1}) \cdot C_{X,d}^n \cdot (I^{\otimes n-1} \otimes F_d)  \cdot X_d^{\otimes n}$. Again, we drop the subscript. When $n=1$, $\zeta(1) = X^{-1} F^{-1} X F X$, is the result from Lemma \ref{lemma_11}. 

Firstly, consider the internal conjugation. The matrix $I^{\otimes n-1} \otimes F$ is a $d^n \times d^n$ block diagonal matrix, in which each submatrix along the diagonal is the $d^{n-1} \times d^{n-1}$ matrix $I^{\otimes n-2} \otimes F$; $I^{\otimes n-1} \otimes F^{-1}$ is similarly defined. Thus, working down recursively, the matrix simply comprises $F$ sub-entries along the main diagonal. It clearly follows that $I^{\otimes n-1} \otimes F$ and $I^{\otimes n-1} \otimes F^{-1}$ are the inverse of one another. Thus, the internal conjugation yields the following, in which the identity diagonal entries stem from the multiplication $(I^{\otimes n-2} \otimes F^{-1}) \cdot I \cdot (I^{\otimes n-2} \otimes F)$.
  \begin{equation*}
(I^{\otimes n-1} \otimes F^{-1}) \cdot C_{X}^{n} \cdot (I^{\otimes n-1} \otimes F) =
      \left(
      \begin{matrix}
        I & 0 & 0 & \ldots & 0 \\
        0 & (I^{\otimes n-2} \otimes F^{-1}) \cdot C_{X}^{n-1} \cdot (I^{\otimes n-2} \otimes F) & 0 & \ldots & 0 \\
        0 & 0 & I & \ldots & 0 \\
        \vdots & \vdots & \vdots & \ddots & \vdots & \\
        0 & 0 & 0 & \ldots & I \\
      \end{matrix}
      \right)
  \end{equation*}
Recalling from Lemma \ref{lemma_11} that conjugation by $X$ details a row shift upwards and column shift leftwards, conjugation per $X^{\otimes n}$ yields the same shift pattern, alongside simultaneous multiplication by $X^{\otimes n-1}$. Thus, we conclude that the top-left entry is the submatrix $({X^{-1}})^{\otimes n-1} \cdot (I^{\otimes n-2} \otimes F^{-1}) \cdot C_{X}^{n-1} \cdot (I^{\otimes n-2} \otimes F)  \cdot X^{\otimes n-1}=\zeta(n-1)$, while all other diagonal entries are specified to be the identity matrix, $({X^{-1}})^{\otimes n-1} \cdot I \cdot X^{\otimes n-1} = I_{d^{n-1}}$.
  \begin{equation*}
      \zeta(n) = 
      \left(
      \begin{matrix}
         \zeta(n-1) & 0 & 0 & & 0 \\
         0 & I_{d^{n-1}} & 0 & \ldots & 0 \\
         0 & 0 & I_{d^{n-1}} & & 0 \\
          & \vdots &  & \ddots & \vdots \\
         0 & 0 & 0 & \ldots & I_{d^{n-1}} \\
      \end{matrix}
      \right)
  \end{equation*}
One can clearly see the generalised nature of the above nature in comparison with the binary case.
  \end{proof}

We now factor in conjugation by the QFT, detailing the generalised diffusion operator in its entirety, which we henceforth denote $\Delta_d(n)$, or just $\Delta$ when the arity and argument are implicit. We follow two separate proofs: the first details the base case on a single qudit, extending Lemma \ref{lemma_11}, and the second extends over $n$-many qudits, extending Lemma \ref{lemma_12} into our first theorem.

\begin{lemma}\label{lemma_13}
$F^{-1} \zeta(1) F = F^{-1} \left( X^{-1} F^{-1} X F X \right) F = \omega^{d-1} X$.
\end{lemma}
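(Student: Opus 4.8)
The plan is to peel off the scalar and reduce the whole statement to a single, standard intertwining identity for the quantum Fourier transform. By Lemma~\ref{lemma_11} we already have $\zeta(1) = X^{-1} F^{-1} X F X = \omega^{d-1} Z^{-1}$, and since $\omega^{d-1}$ is a scalar it commutes past $F^{-1}$ and $F$, so
\[
F^{-1}\zeta(1)F \;=\; F^{-1}\bigl(\omega^{d-1} Z^{-1}\bigr)F \;=\; \omega^{d-1}\, F^{-1} Z^{-1} F .
\]
Thus the lemma is equivalent to the conjugation relation $F^{-1} Z^{-1} F = X$: the QFT converts the diagonal phase (``clock'') operator into the cyclic-shift (``increment'') operator, which is the $d$-ary analogue of the binary identity $H Z H = X$ already implicit in Section~\ref{section_Grover}.

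To establish that relation I would argue entrywise from the definitions set up in the proof of Lemma~\ref{lemma_11}, namely $f_{a,b} = \tfrac1{\sqrt d}\omega^{ab}$ and $f^{\dagger}_{a,b} = \tfrac1{\sqrt d}\omega^{-ab}$, together with the fact that $Z^{-1}$ is diagonal with $b$-th entry $\omega^{-b}$. Computing the $(i,j)$ entry of $F^{-1} Z^{-1} F$ collapses to a single sum,
\[
\bigl(F^{-1} Z^{-1} F\bigr)_{i,j} \;=\; \sum_{b=0}^{d-1} f^{\dagger}_{i,b}\,\omega^{-b}\,f_{b,j} \;=\; \frac1d \sum_{b=0}^{d-1} \omega^{\,b\,(j-i-1)} ,
\]
which is a geometric sum over the $d$th roots of unity and therefore equals $1$ exactly when $j - i - 1 \equiv 0 \pmod d$ and $0$ otherwise. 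Reading this off as a matrix, it is the permutation matrix implementing a unit cyclic shift, i.e.\ $X$ (with the direction of the shift pinned down by our conventions for $X_d$ and for the sign in $F$). Reinstating the scalar gives $F^{-1}\zeta(1)F = \omega^{d-1} X$; one can sanity-check consistency with the base case behind Lemma~\ref{lemma_11}, since at $d=2$ we have $\omega^{d-1} = -1$, recovering exactly the harmless global phase of $-1$ seen in the Grover Diffusion Circuit Theorem.

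The only place where care is genuinely needed -- and the step I expect to be the main obstacle -- is the index bookkeeping inside the exponent $b(j-i-1)$, because that is what decides whether the nonzero band sits at $j \equiv i+1$ or $j \equiv i-1$ modulo $d$, i.e.\ whether one lands on $X$ or on $X^{-1}$. I would therefore fix the conventions (QFT exponent sign, increment direction of $X_d$) explicitly and carry the $\bmod\, d$ arithmetic through verbatim rather than reasoning by analogy with $d=2$, where $X = X^{-1}$ conceals the issue. A slightly cleaner but equivalent route, if one prefers to avoid explicit matrix entries, is to first prove $F X F^{-1} = Z$ by the same root-of-unity sum and then invert and rearrange; either way the computation itself is short and routine once the conjugation identity is isolated.
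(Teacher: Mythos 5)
Your proposal is correct and follows essentially the same route as the paper: an entrywise evaluation of the conjugation in which the geometric sum $\sum_{k}\omega^{k(j-i-1)}$ over the $d$th roots of unity isolates the band $j\equiv i+1 \pmod d$, the only difference being that you peel off the scalar $\omega^{d-1}$ first and phrase the core step as the intertwining identity $F^{-1}Z^{-1}F=X$. Your flagged concern about whether that band corresponds to $X$ or to $X^{-1}$ under the paper's stated convention for $X_d$ is well placed --- the paper's own proof asserts the identification with $X$ without verifying it against that convention --- so carrying the index bookkeeping through explicitly, as you propose, is exactly the right precaution.
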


  \begin{proof}
Recall from lemma \ref{lemma_11} that we can define the entries of the QFT and its inverse per $f_{i,j} = \omega^{ij}$, $f_{i,j}^\dag = \omega^{-ij}$, ignoring the scaling by $\tfrac{1}{\sqrt{d}}$ for now. Moreover, we can define the entries of $\zeta(1)=\omega^{d-1}Z^{-1}$ per $\zeta(1)_{i,j} = \delta_{i,j} \cdot \omega^{d-(i+1)} = \delta_{i,j} \cdot \omega^{d-(j+1)}$; both $i$ and $j$ can define the exponent, as the Kronecker delta mandates that they be equal. Considering the conjugation, let us denote $M = \zeta_d(1) \cdot F$.
  \begin{align*}\begin{split}
m_{i,j} &= \sum_{k=0}^{d-1} \zeta_d(1)_{i,k} \cdot f_{k,j} \\
&= \sum_{k=0}^{d-1} \delta_{i,k} \cdot \omega^{d-(i+1)} \cdot \omega^{kj} \\
&= \omega^{d-(i+1)} \cdot \omega^{ij}
  \end{split}\end{align*}
The last equality comes from fact that $\delta_{i,k} \neq 0 \Leftrightarrow k=i$. To complete the conjugation, denote $N = F^{-1} \cdot M$.
  \begin{align*}\begin{split}
n_{i,j} &= \sum_{k=0}^{d-1} f_{i,k}^\dag \cdot m_{k,j} \\
&= \sum_{k=0}^{d-1} \omega^{-ik} \cdot (\omega^{d-(k+1)} \cdot \omega^{kj}) \\
&= \sum_{k=0}^{d-1} \omega^{d-(k+1)+k(j-i)} \\
&= \omega^{d-1} \sum_{k=0}^{d-1} \omega^{k(j-1-i)}
  \end{split}\end{align*}
Note that, when $i \equiv j-1 \ (\text{mod } d)$, we have that $\omega^{k(j-1-i)} = \omega^{k \cdot 0} = 1$ and subsequently $n_{i,j} = \omega^{d-1} \sum_{k=0}^{d-1} 1 = d \cdot \omega^{d-1}$. On the other hand, when $i \not\equiv j-1 \ (\text{mod } d)$, the summation covers all $d$th roots of unity, yielding $n_{i,j} = \omega^{d-1} \cdot 0$.

Finally, remembering that we initially ignored the scalar coefficients of both $F$ and $F^{-1}$, all entries are then multiplied through by $\tfrac{1}{\sqrt{d}}$. The only nonzero entries are $i,j$ such that $i \equiv j-1 \ (\text{mod } d)$, which corresponds to the nonzero entries in the $X$ matrix. Thus, we conclude that $F^{-1} \zeta(1) F=\omega^{d-1} X$.
  \end{proof}

\begin{theorem} \label{theorem_1}
The generalised Grover diffusion operator $\Delta = {F^{-1}}^{\otimes n} \zeta(n) F^{\otimes n}$, in its entirety, is specified by the matrix $\Delta_d(n) = \frac{1}{d^{n-1}} \left( J_{d^{n-1}} \otimes (\omega^{d-1}X_d - I_d) \right) + I_{d^n}$.
\end{theorem}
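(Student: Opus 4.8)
The plan is to bypass a fresh induction and instead feed the explicit description of $\zeta(n)$ obtained in Lemma~\ref{lemma_12} straight into the conjugation by the QFT. Unrolling the block recursion of Lemma~\ref{lemma_12} down to its base case, $\zeta(n)$ is block diagonal with $d^{n-1}$ blocks of size $d$, the first of which is $\zeta(1)=\omega^{d-1}Z^{-1}$ and all of the remaining ones of which equal $I_d$. Equivalently,
\[
\zeta(n) \;=\; I_{d^n} \;+\; E_{00}\otimes\bigl(\zeta(1)-I_d\bigr),
\]
where $E_{00}$ is the $d^{n-1}\times d^{n-1}$ matrix with a single $1$ in its top-left corner (the orthogonal projector onto the first coordinate). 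I would first record this reformulation, either by reading it off the block form or by a one-line induction on $n$.

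The computation then proceeds by distributing $F^{\otimes n}=F^{\otimes n-1}\otimes F$ through the expression. Since conjugation respects the tensor product, and since $E_{00}$ acts on the first $n-1$ qudits while $\zeta(1)-I_d$ acts on the last,
\[
\Delta \;=\; (F^{-1})^{\otimes n}\,\zeta(n)\,F^{\otimes n} \;=\; I_{d^n} \;+\; \Bigl((F^{-1})^{\otimes n-1}E_{00}F^{\otimes n-1}\Bigr)\otimes\Bigl(F^{-1}\bigl(\zeta(1)-I_d\bigr)F\Bigr).
\]
The right-hand tensor factor is $F^{-1}\zeta(1)F - I_d = \omega^{d-1}X_d - I_d$, directly by Lemma~\ref{lemma_13}. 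For the left-hand factor, observe that $E_{00}$ is the $(n-1)$-fold tensor power of the $d\times d$ corner projector $E^{(d)}_{00}=e_0e_0^{\top}$, so $(F^{-1})^{\otimes n-1}E_{00}F^{\otimes n-1}=\bigl(F^{-1}E^{(d)}_{00}F\bigr)^{\otimes n-1}$. A short computation gives $F^{-1}E^{(d)}_{00}F = (F^{-1}e_0)(e_0^{\top}F) = \tfrac1d J_d$, because the first column of $F^{-1}$ and the first row of $F$ are each $\tfrac1{\sqrt d}$ times the all-ones vector. Hence the left factor equals $\tfrac1{d^{n-1}}J_d^{\otimes n-1} = \tfrac1{d^{n-1}}J_{d^{n-1}}$, using $J_d^{\otimes m}=J_{d^m}$, and assembling the two factors yields $\Delta = I_{d^n} + \tfrac1{d^{n-1}}\bigl(J_{d^{n-1}}\otimes(\omega^{d-1}X_d-I_d)\bigr)$, which is the stated identity.

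The only genuinely delicate points are the structural rewriting $\zeta(n)=I_{d^n}+E_{00}\otimes(\zeta(1)-I_d)$ together with the tensor factorisation $E_{00}=(E^{(d)}_{00})^{\otimes n-1}$, and the small fact that the QFT carries this corner projector to $\tfrac1d J_d$; everything else is routine tensor-product bookkeeping. An alternative would be a standalone recursion on $n$ mirroring the proofs of Lemmas~\ref{lemma_12} and~\ref{lemma_13}, with $\Delta_d(1)=\omega^{d-1}X_d$ (which is Lemma~\ref{lemma_13}) as the base case; but the computation above is shorter and makes the emergence of the $J_{d^{n-1}}$ block transparent.
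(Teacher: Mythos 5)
Your argument is correct, and it takes a genuinely different route from the paper's. The paper proves Theorem \ref{theorem_1} by brute-force block multiplication: it writes $F^{\otimes n}$ and $\zeta(n)$ as $d\times d$ arrays of $d^{n-1}\times d^{n-1}$ blocks, computes the $(i,j)$ block of the conjugation as $\sum_k \omega^{k(j-i)}\,{F^{\dagger}}^{\otimes n-1}\zeta(n)_{k,k}F^{\otimes n-1}$, splits off the $k=0$ term to obtain the recursion $\Delta(n)_{i,j}=\tfrac{1}{d}\bigl(\Delta(n-1)+(d-1)I\bigr)$ or $\tfrac{1}{d}\bigl(\Delta(n-1)-I\bigr)$ according to whether $i=j$, and then unwinds that recursion down to the base case $\Delta(1)=\omega^{d-1}X$ (Lemma \ref{lemma_13}). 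You instead exploit the rank-one structure: writing $\zeta(n)=I_{d^n}+E_{00}\otimes(\zeta(1)-I_d)$ with $E_{00}=(e_0e_0^{\top})^{\otimes n-1}$, the mixed-product property reduces everything to the two single-qudit facts $F^{-1}e_0e_0^{\top}F=\tfrac{1}{d}J_d$ and $F^{-1}\zeta(1)F=\omega^{d-1}X_d$ (the latter again Lemma \ref{lemma_13}), and the $\tfrac{1}{d^{n-1}}J_{d^{n-1}}$ factor appears as $(\tfrac{1}{d}J_d)^{\otimes n-1}$. Both proofs lean on the same two inputs (the unrolled block form of $\zeta(n)$ from Lemma \ref{lemma_12} and the base case of Lemma \ref{lemma_13}); the paper's recursion exhibits the self-similar structure of $\Delta(n)$ in terms of $\Delta(n-1)$, whereas yours is shorter, avoids the $i=j$ versus $i\neq j$ case split (and the paper's unnecessary appeal to the arity being prime when summing the roots of unity), and makes the provenance of the $J_{d^{n-1}}$ factor transparent as the QFT conjugate of the corner projector. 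The only conventions to keep straight are the ordering of tensor factors in the block decomposition, and your usage matches the paper's throughout.
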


  \begin{proof}
Recall the structure of the block matrix $\zeta(n)$ obtained in lemma \ref{lemma_12}, in which the $d^{n-1} \times d^{n-1}$ submatrix entries are defined as follows, indexed from $0$ to $d-1$.
  \begin{align*}
\zeta(n)_{i,j} &=
  \begin{cases}
\zeta(n-1) & i=j=0 \\
I_{d^{n-1}} & i=j \neq 0 \\
0_{d^{n-1}} & i \neq j
  \end{cases}
  \end{align*}
The tensor product of QFT and its inverse, $F^{\otimes n}$ and ${F^{-1}}^{\otimes n}$, per the following submatrix entries, indexed from $0$.
  \begin{align*}
f_{i,j}^{\otimes n} &= \tfrac{1}{\sqrt{d}}\cdot \omega^{ij} \cdot F^{\otimes n-1}\  ,\  f_{i,j}^{\dag^{\otimes n}} = \tfrac{1}{\sqrt{d}} \cdot \omega^{-ij} \cdot {F^{-1}}^{\otimes n-1}
  \end{align*}
Firstly, denoting $M = \zeta(n) F^{\otimes n}$, we derive the following.
  \begin{align*}
m_{i,j} = \sum_{k=0}^{d-1} \zeta(n)_{i,k} \cdot \omega^{kj} \cdot F^{\otimes n-1}
  \end{align*}
As $\zeta(n)$ is diagonal, we can disregard all but one entry in the summation.
  \begin{align*}
m_{i,j} = \zeta(n)_{i,i} \cdot \omega^{ij} \cdot F^{\otimes n-1}
  \end{align*}
To conclude the conjugation, denote $N = {F^{-1}}^{\otimes n-1} M$.
  \begin{align*}\begin{split}
n_{i,j} &= \sum_{k=0}^{d-1} \omega^{-ik} \cdot {F^{\dag}}^{\otimes n-1} \cdot m_{k,j} \\
&= \sum_{k=0}^{d-1} \omega^{-ik} \cdot {F^{\dag}}^{\otimes n-1} \cdot \zeta(n)_{k,k} \cdot \omega^{kj} \cdot F^{\otimes n-1} \\
&= \sum_{k=0}^{d-1} \omega^{k(j-i)} \cdot {F^{\dag}}^{\otimes n-1} \cdot \zeta(n)_{k,k} \cdot F^{\otimes n-1}
  \end{split}\end{align*}
Once more, recall the structure of the diagonal $\zeta(n)$, in which the top-left entry is $\zeta(n-1)$, while the remainder of the diagonal is specified by identity matrices. 
  \begin{align*}
n_{i,j} &= \omega^{0(j-i)}\cdot\Delta(n-1) + \sum_{k=1}^{d-1} \omega^{k(j-1)} \cdot I
  \end{align*}
The above summation yields two cases. Firstly, when $i=j$, the coefficient $\omega^{k(j-i)}=1$ and the entry is given to be $n_{i,j}=\Delta(n-1) + (d-1)I$. On the other hand, when $i \neq j$, the summation covers all the roots of unity with a non-zero exponent, as our arity is prime, and the entry is given per $n_{i.j}=\Delta(n-1) + \sum_{k=1}^{d-1} \omega^k I = \Delta(n-1) - I$. 

To conclude, the structure of $\Delta(n)$ can be expressed as follows, now factoring in the scaling of the QFT that we initially disregarded.
  \begin{align*}
\Delta(n)_{i,j} &=\frac{1}{d} \begin{cases} \Delta(n-1) + (d-1)I & i=j \\ \Delta(n-1) - I & i \neq j \end{cases} \\
&= \frac{1}{d^{n-1}} \left( J_{d^{n-1}} \otimes (\omega^{d-1}X - I) \right) + I
  \end{align*}
With the second derivation coming from recursively working down to the base case, in which $\Delta(1)=\omega^{d-1}X$, then factoring in the $-I_d$ terms in all entries bar the diagonal, which instead have include a global addition through $+I_d$, implying $+I_{d^n}$.
  \end{proof}

Thus, we conclude the structural analysis, deriving a concise and cooperative expression for the generalised diffusion operator $\Delta$, with which can proceed to detail behaviour.

  \subsection{The Invariant Grover Subspace}

Recall that, in the binary case, both the diffusion and black-box operators were invariant under the subspace $\{\ket{\tau},\ket{s}\}$, which we denoted the Grover Subspace. We might expect the same to be true for the generalised operators and this is indeed the case. We proceed to determine this invariant subspace, denoted the $d$\textit{-dimensional Grover Subspace} $G_d$; note that we derive this subspace solely from the generalised diffusion operator, then define the black-box operator in accordance with it. The result is presented in Theorem 2, following the intermediary work of Lemmas \ref{lemma_11} and \ref{lemma_12}. We drop the subscript from the Grover Subspace when it is implicit.

Theorem \ref{theorem_1} showed us that the matrix $\omega^{d-1}X-I$ underlies the structure of $\Delta$; let us denote this matrix $\Lambda$. As such, its properties are of importance in analysing $\Delta$, and we presently discuss them. The matrix is of rank $d-1$ and its eigenvectors constitute $d-1$ column vectors of the $d$-ary QFT, $F$. Explicitly, they are the $1$st, $3$rd, $4$th, etc columns; i.e. $f_0,f_2,\ldots,f_{d-1}$. The $2$nd column vector spans the null-space. The respective eigenvalues are $(\omega^{d-1}-1)$, $(\omega-1)$, $(\omega^2-1)$, etc up to $(\omega^{d-2}-1)$ for the $d$th column. The proof is simple and thus omitted.

In Dirac's notation, the column vectors of $F$ are denoted $\ket{f_i}$, such that $i\in[0,d-1]$. We extend $\ket{f_i}$ to define the $d^n$-dimensional vectors $\ket{s_i}$, with which we can reason about the system. Let each $\ket{s_i}$ be the block vector constituting $d^{n-1}$-many copies of the $d$-dimensional subvector $\ket{f_i}$, normalised. E.g. $\ket{s_0}=\tfrac{1}{ \sqrt{d^{n-1}} }\left(\ket{f_0}\ \ket{f_0}\ \cdots\ \ket{f_0}\right)^T$. We consider the behaviour of $\Delta$ on each $\ket{s_i}$ for $i=0,2,3,\ldots,d-1$ and $\ket{\tau}$, to construct $G$.

\begin{lemma} \label{lemma_21}
$\Delta_d\ket{s_i}=\omega^{i-1}\ket{s_i}$ for $i=0,2,3,\ldots,d-1$. 
\end{lemma}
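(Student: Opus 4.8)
The plan is to apply the closed form for $\Delta$ obtained in Theorem \ref{theorem_1} directly to the block vector $\ket{s_i}$, exploiting the Kronecker‐product (block) structure of both objects. First I would write $\ket{s_i} = \tfrac{1}{\sqrt{d^{n-1}}}\,(\mathbf{1}_{d^{n-1}} \otimes \ket{f_i})$, where $\mathbf{1}_{d^{n-1}}$ denotes the all-ones column vector of length $d^{n-1}$ and $\ket{f_i}$ is taken normalised; this is precisely the statement that $\ket{s_i}$ stacks $d^{n-1}$ normalised copies of $\ket{f_i}$. I would also recall from the remarks preceding this lemma that $\Lambda = \omega^{d-1}X - I$ satisfies $\Lambda\ket{f_i} = (\omega^{i-1}-1)\ket{f_i}$ for $i \in \{0,2,3,\ldots,d-1\}$ — these are exactly the listed eigenvalues $\omega^{d-1}-1,\ \omega-1,\ \omega^2-1,\ \ldots,\ \omega^{d-2}-1$ — the column $\ket{f_1}$ being the one excluded since it spans $\ker\Lambda$.

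Then, using $\Delta = \tfrac{1}{d^{n-1}}\big(J_{d^{n-1}}\otimes\Lambda\big) + I_{d^n}$ together with the mixed-product rule $(A\otimes B)(u\otimes v) = (Au)\otimes(Bv)$ and the elementary identity $J_{d^{n-1}}\mathbf{1}_{d^{n-1}} = d^{n-1}\,\mathbf{1}_{d^{n-1}}$, the computation collapses to one line:
\[
\big(J_{d^{n-1}}\otimes\Lambda\big)\big(\mathbf{1}_{d^{n-1}}\otimes\ket{f_i}\big) = d^{n-1}(\omega^{i-1}-1)\,\big(\mathbf{1}_{d^{n-1}}\otimes\ket{f_i}\big),
\]
whence $\Delta\big(\mathbf{1}_{d^{n-1}}\otimes\ket{f_i}\big) = \big((\omega^{i-1}-1)+1\big)\big(\mathbf{1}_{d^{n-1}}\otimes\ket{f_i}\big) = \omega^{i-1}\big(\mathbf{1}_{d^{n-1}}\otimes\ket{f_i}\big)$. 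Dividing through by the normalising constant $\tfrac{1}{\sqrt{d^{n-1}}}$, which passes transparently through the linear operator $\Delta$, yields $\Delta_d\ket{s_i} = \omega^{i-1}\ket{s_i}$, as claimed.

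There is no genuine obstacle here: once the closed form of $\Delta$ from Theorem \ref{theorem_1} and the eigenstructure of $\Lambda$ are in hand, the statement is a single Kronecker-product calculation. The only points requiring care are bookkeeping ones — checking that the single expression $\omega^{i-1}$ correctly subsumes the $i=0$ case via $\omega^{-1}=\omega^{d-1}$, and noting that $i=1$ is legitimately omitted because $\ket{f_1}\in\ker\Lambda$ would instead give $\Delta_d\ket{s_1}=\ket{s_1}$, a statement handled separately. (One could alternatively prove the lemma by the same recursion on $n$ used throughout Section \ref{section_d-ary}, with base case Lemma \ref{lemma_13} giving $\Delta(1)\ket{f_i}=\omega^{d-1}X\ket{f_i}=\omega^{i-1}\ket{f_i}$, but invoking the already-established closed form is cleaner.)
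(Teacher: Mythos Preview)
Your proposal is correct and takes essentially the same approach as the paper: both apply the closed form $\Delta=\tfrac{1}{d^{n-1}}(J_{d^{n-1}}\otimes\Lambda)+I$ from Theorem~\ref{theorem_1} to the block vector $\ket{s_i}$, invoke the eigenrelation $\Lambda\ket{f_i}=(\omega^{i-1}-1)\ket{f_i}$, and simplify. Your use of the mixed-product rule and the identity $J\mathbf{1}=d^{n-1}\mathbf{1}$ is a slightly cleaner packaging of exactly the block computation the paper writes out explicitly.
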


This is to say, the behaviour of the binary diffusion operator $U_s$ on the state $\ket{s}$, $U_s\ket{s}=-\ket{s}$, generalises.

  \begin{proof}
In considering $\Delta\ket{s_i}$, we must be careful to keep our states normalised and remember that both $\Delta$ and $\ket{s_i}$ are block-structures, composed of $d \times d$ and $d \times 1$ sub-matrices and -vectors. Since $\Lambda\ket{f_i} = (\omega^{i-1}-1)\ket{f_i}$, we obtain
  \begin{align*}\begin{split}
\Delta\ket{s_i} &= \tfrac{1}{d^{n-1}} \left( J_{d^{n-1}} \otimes (\omega^{d-1}X-I) \right) \tfrac{1}{\sqrt{d^n}} \left(J_{d^{n-1},1} \otimes \sqrt{d}\ket{f_i}\right)+ \ket{s_0} \\
&= \frac{1}{d^{n-1}}\frac{\sqrt{d}}{\sqrt{d^n}} \left(\begin{matrix}d^{n-1}\Lambda\ket{f_i} \\ \vdots \\ d^{n-1}\Lambda\ket{f_i}\end{matrix}\right) + \ket{s_i} \\
&= (\omega^{i-1}-1)\ket{s_i} + \ket{s_i} \\
&= \omega^{i-1}\ket{s_i}
  \end{split}\end{align*}
Thus we have that each considered $\ket{s_i}$ is an eigenvector of the diffusion operator, with its eigenvalue determined by its composite vectors column index in the QFT: the $i$ in $\ket{f_i}$ .
  \end{proof}

The following lemma, demonstrating the behaviour of $\Delta$ on $\ket{\tau}$, is quite technical; as such, we work through an example at the end of this section. The result details that, under $\Delta$, $\ket{\tau}$ is transformed into the sum of vectors including itself and each $\ket{s_i}$ for $i=0,2,3,\cdots,d-1$ multiplied by some scaling amount, that depends on both the value of $i$ and position of the $1$-entry in $\tau$.

\begin{lemma} \label{lemma_22}
$\Delta_d\ket{\tau}=\ket{\tau}+\sum_{i \in [0,d-1]-\{1\}}f_{i,k}^\dag \tfrac{\omega^{i-1}-1}{\sqrt{N}}\ket{s_i}$, with the $k$ depending on the position of the $1$-entry within $\tau$; explicitly, the row index of the $1$ entry, modulo $d$, indexed from $0$.
\end{lemma}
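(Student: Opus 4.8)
The plan is to mirror the proof of Lemma~\ref{lemma_21}, using the block-tensor form of $\Delta$ from Theorem~\ref{theorem_1} together with the spectral decomposition of $\Lambda$ recorded just before that lemma; the only genuinely new point is that $\ket{\tau}$ is a standard basis vector rather than an eigenvector of $\Lambda$, so it must first be resolved in the $\ket{f_i}$ basis.

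First I would fix notation for $\ket{\tau}$. Since $\tau$ is an element of the $d^n$-element search space, $\ket{\tau}$ is the column vector with a single $1$ in row $\tau$ (indexing from $0$); partitioning its $d^n$ entries into $d^{n-1}$ consecutive blocks of length $d$, that $1$ lies in block $a=\lfloor \tau/d\rfloor$ at position $k=\tau \bmod d$ inside the block, so that $\ket{\tau}=\ket{e_a}\otimes\ket{e_k}$ with $\ket{e_a}$ and $\ket{e_k}$ standard basis vectors of dimension $d^{n-1}$ and $d$ respectively. This $k$ is exactly the ``row index of the $1$-entry, modulo $d$'' appearing in the statement.

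Next I would apply $\Delta=\tfrac{1}{d^{n-1}}\bigl(J_{d^{n-1}}\otimes\Lambda\bigr)+I_{d^n}$ via the mixed-product rule for the tensor product and expand $\ket{e_k}$ in the eigenbasis of $\Lambda$. Since $F$ is unitary its columns $\ket{f_0},\dots,\ket{f_{d-1}}$ are orthonormal, so, writing $f^\dag_{i,k}=\omega^{-ik}$ (the QFT scaling $\tfrac{1}{\sqrt d}$ kept separate, as in Lemma~\ref{lemma_13}), we have $\ket{e_k}=\tfrac{1}{\sqrt d}\sum_{i} f^\dag_{i,k}\ket{f_i}$. Applying $\Lambda$ termwise, with $\Lambda\ket{f_1}=0$ and $\Lambda\ket{f_i}=(\omega^{i-1}-1)\ket{f_i}$ for $i\in[0,d-1]-\{1\}$, kills the $i=1$ term, and since every column of the all-ones matrix $J_{d^{n-1}}$ equals the all-ones vector $\mathbf 1$ the block index $a$ drops out and
\begin{align*}
\Delta\ket{\tau} = \ket{\tau}+\frac{1}{d^{n-1}\sqrt d}\sum_{i\in[0,d-1]-\{1\}} f^\dag_{i,k}(\omega^{i-1}-1)\bigl(\mathbf 1\otimes\ket{f_i}\bigr).
\end{align*}
Recognising that $\mathbf 1\otimes\ket{f_i}$ equals $\sqrt{d^{n-1}}$ times the normalised block vector $\ket{s_i}$ of Lemma~\ref{lemma_21}, and using $d^{n-1}\sqrt d=\sqrt{d^n}=\sqrt N$, the constants collapse to $\tfrac{1}{\sqrt N}$, giving
\begin{align*}
\Delta\ket{\tau}=\ket{\tau}+\sum_{i\in[0,d-1]-\{1\}} f^\dag_{i,k}\frac{\omega^{i-1}-1}{\sqrt N}\ket{s_i},
\end{align*}
which is the claimed identity.

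I expect the main obstacle to be purely bookkeeping: tracking how the $\tfrac{1}{d^{n-1}}$ prefactor of $\Delta$, the $\tfrac{1}{\sqrt d}$ from the QFT column entries, and the $\tfrac{1}{\sqrt{d^{n-1}}}$ normalisation of $\ket{s_i}$ combine into the single factor $\tfrac{1}{\sqrt N}$, and being explicit that the block index $a$ genuinely disappears when $J_{d^{n-1}}$ hits $\ket{e_a}$. Everything else is the already established spectrum of $\Lambda$ and a single use of the tensor mixed-product rule; stylistically one could also suppress $\otimes$ throughout and argue block by block, exactly as in the proof of Lemma~\ref{lemma_21}.
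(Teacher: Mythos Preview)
Your proof is correct and follows essentially the same route as the paper: decompose $\ket{\tau}$ into its block structure, apply the block form $\Delta=\tfrac{1}{d^{n-1}}(J\otimes\Lambda)+I$, and then express $\Lambda\ket{e_k}$ in the QFT-column basis to pick up the coefficients $\omega^{-ik}(\omega^{i-1}-1)$. The only cosmetic difference is that you expand $\ket{e_k}$ in the eigenbasis of $\Lambda$ first and then apply $\Lambda$, whereas the paper applies $\Lambda$ first and then reads off the same coefficients by computing $M=F^{-1}\Lambda$ entrywise; your use of the tensor mixed-product rule in place of explicit block manipulation is likewise just a notational choice.
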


  \begin{proof}
We consider $\ket{\tau}$ as a $d^n$-dimensional vector composed of $d^{n-1}$-many $d$-dimensional subvectors, as with our consideration of the $\ket{s_i}$. As $\ket{\tau}$ only contains one non-zero entry, the vector can be expressed per a collection of $d^{n-1}-1$ zero vectors and one vector $\ket{\tau_k}$ containing the one non-zero entry in its $k$th row, indexed from $0$: i.e. $\ket{\tau}=\left(0\ \cdots\ 0\ \tau_k\ 0\ \cdots\ 0\right)^T$, $\ket{\tau_k}=\left(0\ \cdots\ 0\ 1\ 0\ \cdots\ 0\right)^T$.
  \begin{align*}\begin{split}
\Delta\ket{\tau} &= \tfrac{1}{d^{n-1}} \left( J_{d^{n-1}} \otimes \Lambda \right)\ket{\tau} + \ket{\tau} \\
&= \tfrac{1}{d^{n-1}} \left(J_{d^{n-1},1} \otimes \Lambda\ket{\tau_k}\right) + \ket{\tau} \\
&= \frac{1}{d^{n-1}} \left(\begin{matrix}\Lambda\ket{\tau_k} \\ \vdots \\ \Lambda\ket{\tau_k}\end{matrix}\right) + \ket{\tau}
  \end{split}\end{align*}
The result is dependent on the given $\tau_i$ under the operator $A=\omega^{d-1}X_d - I_d$, which we can solve for all $\tau_i$ as follows. Firstly, consider that the matrix $T$ containing each $\tau_i$ as the $i$th column vector is simply the identity matrix $I_d$. Thus, $\Lambda T=\Lambda$. Consequently, we can express each $\Lambda\ket{\tau_i}$ as linear combination of the vectors $\{\ket{f_0},\ket{f_2},\ket{f_3},\cdots,\ket{f_{d-1}}\}$, as we know that $\Lambda$ has rank $d-1$ and the QFT has rank $d$. We can derive an expression for each column vector of $\Lambda$ as a linear combination of column vectors in the QFT by solving the linear system $\Lambda = F \cdot M$, for some matrix of coefficients $M$. The coefficients are then extracted by inverting the QFT and determining $M=F^{-1} \cdot \Lambda$. 
  \begin{align*}\begin{split}
F^{-1}\left(\omega^{d-1}X - I\right) &= F^{-1}\omega^{d-1}X - F^{-1} \\
\left(F^{-1}\left(\omega^{d-1}X - I\right)\right)_{i,j} &= \left(\sum_{k=0}^{d-1} f_{i,k}^\dag \cdot \delta_{k,j-1} \cdot \omega^{d-1} \right) -f_{i.j}^\dag \\
 &=  \left(\sum_{k=0}^{d-1} \omega^{-ik} \cdot \delta_{k,j-1} \cdot \omega^{d-1} \right) - \omega^{-ij} \\
&= \omega^{-i(j-1)} \cdot \omega^{d-1} - \omega^{-ij} \\
&= \omega^{-ij}\left(\omega^{i-1}-1\right)
  \end{split}\end{align*}
The $i$th column vector in $M$ details, indexing from $0$, details the coefficients in expressing $A\ket{\tau_i}$ as a linear combination of the column vectors of the QFT. Immediately, one notes that the coefficients on the $2$nd row ($i=1$) are always 0. We now substitute these into the operation $\Delta\ket{\tau}$.
  \begin{align*}\begin{split}
\Delta\ket{\tau} &= \frac{1}{d^{n-1}} \left(\begin{matrix}\Lambda\ket{\tau_k} \\ \vdots \\ \Lambda\ket{\tau_k}\end{matrix}\right) + \ket{\tau} \\
&=  \frac{1}{d^{n-1}} \left(\begin{matrix}\tfrac{1}{\sqrt{d}} \sum_i \omega^{-ik}(\omega^{i-1}-1)\ket{f_i} \\ \vdots \\ \tfrac{1}{\sqrt{d}} \sum_i \omega^{-ik}(\omega^{i-1}-1)\ket{f_i}\end{matrix}\right) + \ket{\tau} \\
&= \frac{1}{d^n} \left(\sum_i \omega^{-ik}(\omega^{i-1}-1) (\sqrt{d^n}\ket{s_i})\right) + \ket{\tau} \\
&= \frac{1}{\sqrt{d^n}} \left(\sum_{i \in [0,d-1]-{1}} \omega^{-ik}\left(\omega^{i-1}-1\right) \ket{s_i}\right) + \ket{\tau} 
  \end{split}\end{align*}
Defining $N$ to be the size of the codomain of the function being reversed by Grover's algorithm, $N=d^n$. Thus, the above details the lemma statement. 
  \end{proof}

Note the preservation of the binary operator from Section $3$, again up to a $-1$ phase shift: $\Delta_2\ket{\tau}=\ket{\tau}+\tfrac{\omega_2^1-1}{\sqrt{N}}\ket{s}=\ket{\tau}-\tfrac{2}{\sqrt{N}}\ket{s}=(-1)\cdot U_s\ket{\tau}$.

\begin{theorem} \label{theorem_2}
$\Delta_d$ is invarant under the $d$-dimensional Grover subspace $G_d = \left\langle \ket{\tau},\ket{s_0},\ket{s_2},\cdots,\ket{s_{d-1}} \right\rangle$.
\end{theorem}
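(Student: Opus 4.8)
The plan is to read Theorem~\ref{theorem_2} off directly from Lemmas~\ref{lemma_21} and~\ref{lemma_22}, together with the closed form for $\Delta_d$ in Theorem~\ref{theorem_1}. A subspace $W$ is invariant under a linear map $A$ precisely when $Aw\in W$ for every $w$ in some spanning set of $W$: the general statement $AW\subseteq W$ then follows by linearity. So it suffices to compute the image under $\Delta_d$ of each of the $d$ generators $\ket{\tau},\ket{s_0},\ket{s_2},\ldots,\ket{s_{d-1}}$ of $G_d$, and both preceding lemmas have already done exactly this work.

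First I would dispatch the vectors $\ket{s_i}$. Lemma~\ref{lemma_21} gives $\Delta_d\ket{s_i}=\omega^{i-1}\ket{s_i}$ for every $i\in\{0,2,3,\ldots,d-1\}$, so each such generator is an eigenvector of $\Delta_d$ and its image is a scalar multiple of itself, trivially inside $G_d$. Next I would treat $\ket{\tau}$. Lemma~\ref{lemma_22} gives
\begin{align*}
\Delta_d\ket{\tau}=\ket{\tau}+\sum_{i\in[0,d-1]-\{1\}}f_{i,k}^\dag\,\frac{\omega^{i-1}-1}{\sqrt{N}}\,\ket{s_i},
\end{align*}
which is a linear combination of $\ket{\tau}$ and the $\ket{s_i}$, hence an element of $G_d$. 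Since $\Delta_d$ maps every generator of $G_d$ back into $G_d$, linearity yields $\Delta_d G_d\subseteq G_d$, i.e.\ $G_d$ is invariant under $\Delta_d$, which is the assertion of the theorem.

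The only point requiring a little more than bookkeeping is justifying the adjective ``$d$-dimensional'', i.e.\ that the $d$ listed generators are linearly independent. The $d-1$ vectors $\ket{s_i}$ are the normalised block-repetitions of the columns $\ket{f_i}$, $i\neq1$, of the invertible QFT matrix $F$; linear independence of these columns is inherited by the block-repetition map, so $\langle\ket{s_0},\ket{s_2},\ldots,\ket{s_{d-1}}\rangle$ is $(d-1)$-dimensional. That $\ket{\tau}$ is not in this span can be argued in two short ways: for $n\geq2$ every vector in the span has all $d^{n-1}$ length-$d$ blocks equal, whereas $\ket{\tau}$ is nonzero in exactly one block; and for $n=1$, $\ket{\tau}$ is a standard basis vector $e_k$ whose unique coordinates in the full QFT basis form the $k$th column of $F^{-1}$, whose $\ket{f_1}$-coefficient $\tfrac{1}{\sqrt{d}}\omega^{-k}$ is nonzero, so $e_k\notin\langle\ket{f_i}:i\neq1\rangle$. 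Hence $\dim G_d=d$. I do not expect a genuine obstacle here: the substance of the result was already carried by Lemmas~\ref{lemma_21} and~\ref{lemma_22}, and Theorem~\ref{theorem_2} is essentially their corollary.
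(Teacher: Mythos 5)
Your proposal is correct and follows the same route as the paper, which likewise derives the theorem as a direct corollary of Lemmas \ref{lemma_21} and \ref{lemma_22} by checking the action of $\Delta_d$ on each generator of $G_d$. Your additional verification that the $d$ generators are linearly independent (so that $G_d$ is genuinely $d$-dimensional) is a small bonus the paper leaves implicit.
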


  \begin{proof}
The theorem is a direct corollary from Lemmas \ref{lemma_21} and \ref{lemma_22}. Moreover, as the algorithm specifies that we initialise the system into $\ket{s_0}$, we know that this invariant subspace is maintained throughout computation.
  \end{proof}
 
Let us now consider a concrete example: ternary. The coefficients of the $\ket{s_0}$ and $\ket{s_2}$ terms are obtained from the column vectors of the specified $M = F^{-1} \Lambda \left(\tau_1\ \tau_2\ \tau_3\right)^T=F^{-1} \Lambda$.
  \begin{align*}
M &= \frac{1}{\sqrt{3}}\left(\begin{matrix} 1 & 1 & 1 \\ 1 & \omega^2 & \omega \\ 1 & \omega & \omega^2 \end{matrix}\right) \left(\begin{matrix} -1 & \omega^2 & 0 \\ 0 & -1 & \omega^2 \\ \omega^2 & 0 & -1 \end{matrix}\right) = \frac{1}{\sqrt{3}}\left(\begin{matrix} \omega^2-1 & \omega^2-1 & \omega^2-1 \\ 0 & 0 & 0 \\ \omega-1 & \omega(\omega-1) & \omega^2(\omega-1) \end{matrix}\right)
  \end{align*}
Again, note that the coefficient of $\ket{s_1}$ is always $0$, as $\Lambda$ has rank $2$ and the column vectors of the inverse QFT span $\mathbb{C}^3$. Extracting the coefficients, we detail the operation as follows.
  \begin{align*}
\Delta_3\ket{\tau} = \Delta_3 \left(0\ \ldots\ \ket{\tau_k}\ \ldots\ 0\right)^T= \frac{\omega^2-1}{\sqrt{N}} \ket{s_0} + \omega^{-2\cdot k} \frac{\omega-1}{\sqrt{N}} \ket{s_2} + \ket{\tau}
  \end{align*}
We can thus express the behaviour of $\Delta_3$ in the $3$-dimensional Grover Subspace $G_3 = \left\langle \ket{\tau},\ket{s_0},\ket{s_2} \right\rangle$ as follows. Note the phase- shift on $\ket{s_2}$, dependent upon the position of the $1$ element within $\tau$.
  \begin{align*}
\Delta_3(n) &: a\ket{\tau} + b\ket{s_2} + c\ket{s_0} \mapsto (\ket{\tau} \ket{s_2} \ket{s_0}) \left( \begin{matrix} 1 & 0 & 0 \\ \omega^{-2k}\frac{\omega-1}{\sqrt{N}} & \omega & 0 \\ \frac{\omega^2-1}{\sqrt{N}} & 0 & \omega^2 \\ \end{matrix} \right) \left( \begin{matrix} a \\ b \\ c \\ \end{matrix} \right) 
  \end{align*}

We have followed the direction of the binary proof and now need only define the extended black-box operator before considering $r$ applications of both operators, on a system starting in $\ket{s_0}$.

  \subsection{Ternary Behavioural Analysis}
Presently, we proceed with consideration solely for the ternary case, for reasons explained later. Given the structure of $\Delta$, derived in Theorem \ref{theorem_1}, and corresponding invariant Grover Subspace, derived in Theorem \ref{theorem_2}, we define the generalised black-box operator, denoted $\Upsilon$. Notice that, in the binary case, the product of the two operators $\Gamma_2=U_S U_\tau$ tends to equivalence with the identity matrix as $N \to \infty$, but the two are not inverses! With this guidance, we define $\Upsilon$ to similarly reflect a `near-inverse' of $\Delta$.
  \begin{align*}
\Delta_3 = \left(\begin{matrix}1 & 0 & 0 \\ \omega^{-2k}\frac{\omega-1}{\sqrt{N}} & \omega & 0 \\ \frac{\omega^2-1}{\sqrt{N}} & 0 & \omega^2 \end{matrix}\right) , \Upsilon_3 = \left(\begin{matrix}1 & \omega^{+2k}\frac{1-\omega^2}{\sqrt{N}} & \frac{1-\omega}{\sqrt{N}} \\ 0 & \omega^2 & 0 \\ 0 & 0 & \omega \end{matrix}\right)
  \end{align*}

For the remainder of this subsection (4.3), we shall drop the $3$ subscript from the operators, $\Delta_3$ and $\Upsilon_3$, and their composition, $\Gamma_3$. We leave the details of the black-box operator as such, disregarding circuit-specifics, proceeding onto analysis of the behaviour and performance of $\Gamma=\Delta\Upsilon$. As with the binary case, we initialise the system to state $\ket{s_0}$, then consider $r$ applications of $\Gamma$ such that $r$ maximises the probability of measuring $\ket{\tau}$. The probability is determined by the amplitude of the system projected onto the $\tau$ plane. As with the binary case, $r$ should be a periodic function on $N$ with the same rate of growth as the binary result. But first, we must know the structure of $\Gamma$, given as follows.
  \begin{align*}
\Gamma = \Delta \Upsilon = \left(\begin{matrix} 1 & \omega^{-2k}\frac{1-\omega^2}{\sqrt{N}} & \frac{1-\omega}{\sqrt{N}} \\ \omega^{2k}\frac{\omega-1}{\sqrt{N}} & 1-\frac{3}{N} & \omega^{2k}\frac{3\omega}{N}\\ \frac{\omega^2-1}{\sqrt{N}} & \omega^{-2k}\frac{3\omega^2}{N} & 1 - \frac{3}{N} \end{matrix}\right)
  \end{align*}

\begin{lemma} \label{lemma_31}
The characteristic polynomial of $\Gamma_3$ is given to be $p_3(\lambda)=(1-\lambda)\left(\lambda^2 - 2\left( 1-\tfrac{3}{N} \right)\lambda + 1\right)$, from which we extract that the eigenvalues are $1$, $e^{iT}$, and $e^{-iT}$, with $\cos(T)=1-\tfrac{3}{N}$.
\end{lemma}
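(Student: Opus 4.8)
The plan is to determine $p_3(\lambda)=\det(\Gamma_3-\lambda I_3)$ without expanding the $\lambda$-dependent determinant in full. Instead I would pin it down from three scalar facts: $\operatorname{tr}\Gamma_3$, $\det\Gamma_3$, and the claim that $\lambda=1$ is an eigenvalue. Once $(1-\lambda)$ is known to divide $p_3$, writing $p_3(\lambda)=(1-\lambda)(\lambda^2+b\lambda+c)$ and matching the constant term and the $\lambda^2$-coefficient forces $b$ and $c$, and the eigenvalue statement then follows by solving that quadratic.

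First I would record the two cheap invariants. Since $\Delta_3$ and $\Upsilon_3$ are triangular with diagonals $(1,\omega,\omega^2)$ and $(1,\omega^2,\omega)$, each has determinant $\omega^3=1$, whence $\det\Gamma_3=\det\Delta_3\cdot\det\Upsilon_3=1$. Reading the displayed matrix for $\Gamma_3$ off directly gives $\operatorname{tr}\Gamma_3=1+(1-\tfrac3N)+(1-\tfrac3N)=3-\tfrac6N$.

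The heart of the argument is to verify $\det(\Gamma_3-I_3)=0$, i.e.\ that $1$ is an eigenvalue (equivalently, one could exhibit the corresponding eigenvector, but the determinant route is shorter). The matrix $\Gamma_3-I_3$ has a $0$ in position $(1,1)$, so cofactor expansion along the first row reduces the determinant to $\pm$ a combination of the two minors $M_{12}$ and $M_{13}$ formed from rows $2$ and $3$. Each of these $2\times 2$ determinants factors as a phase $\omega^{\pm 2k}$ times $\tfrac{3}{N\sqrt N}$ times a bracket, and in both cases the bracket simplifies to $\omega^3-1$, hence to $0$; for example the $M_{12}$ bracket is $(\omega-1)+\omega(\omega^2-1)=\omega^3-1$. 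Therefore $\det(\Gamma_3-I_3)=0$. I expect this to be the main obstacle, though it is essentially bookkeeping: one has to transport the $k$-dependent phases $\omega^{2k}$ and $\omega^{-2k}$ through the minors so that they cancel, and then apply $\omega^3=1$.

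Finally I would assemble the pieces. Writing $p_3(\lambda)=(1-\lambda)(\lambda^2+b\lambda+c)$, evaluation at $\lambda=0$ gives $c=\det\Gamma_3=1$, and comparing the coefficient of $\lambda^2$ gives $1-b=\operatorname{tr}\Gamma_3=3-\tfrac6N$, so $b=-2(1-\tfrac3N)$; this is exactly $p_3(\lambda)=(1-\lambda)\bigl(\lambda^2-2(1-\tfrac3N)\lambda+1\bigr)$. For the eigenvalues, $\lambda=1$ is immediate, and the quadratic factor has discriminant $4(1-\tfrac3N)^2-4<0$ for $N=3^n\ge 3$, so its roots are a conjugate pair with product $c=1$, hence of modulus $1$; writing them as $e^{\pm iT}$ and matching $e^{iT}+e^{-iT}=2(1-\tfrac3N)$ gives $\cos T=1-\tfrac3N$, as claimed.
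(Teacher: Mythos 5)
Your proposal is correct, and it supplies a proof where the paper deliberately gives none: the paper merely remarks that the lemma ``comes from simply following the calculations necessary to obtain the determinant of $\Gamma-\lambda I$'' and omits the details, i.e.\ its implicit route is a brute-force expansion of the $3\times 3$ characteristic determinant. Your route is genuinely different in organisation and rather cleaner. The three ingredients all check out: $\Delta_3$ and $\Upsilon_3$ are lower- and upper-triangular with diagonals $(1,\omega,\omega^2)$ and $(1,\omega^2,\omega)$, so $\det\Gamma_3=\omega^3\cdot\omega^3=1$; the trace is read off as $3-\tfrac{6}{N}$; and in $\det(\Gamma_3-I_3)$ the $(1,1)$ entry vanishes while both relevant $2\times 2$ minors carry a bracket equal to $\omega^3-1=0$ (e.g.\ $(\omega-1)+\omega(\omega^2-1)=\omega^3-1$), with the $k$-dependent phases $\omega^{\pm 2k}$ factoring out harmlessly. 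Since the leading coefficient, the $\lambda^2$-coefficient, the constant term and the root $\lambda=1$ together determine a cubic, your parametrisation $p_3(\lambda)=(1-\lambda)(\lambda^2+b\lambda+c)$ with $c=\det\Gamma_3=1$ and $1-b=\operatorname{tr}\Gamma_3$ pins down the claimed polynomial, and the negative discriminant for $N\geq 3$ together with the product of roots being $1$ gives the conjugate pair $e^{\pm iT}$ with $\cos T=1-\tfrac{3}{N}$. What your approach buys over blind expansion is that it isolates exactly which cancellations matter (the single identity $\omega^3=1$ in two minors) and it makes the $\det=1$ and trace facts transparent from the triangular factors; the part that does not immediately scale to the $d$-ary polynomial $p_d(\lambda)=(1-\lambda)^{d-2}(\lambda^2-2(1-\tfrac{d}{N})\lambda+1)$ claimed later in the paper is the eigenvalue-$1$ step, which would there require exhibiting a $(d-2)$-fold root rather than a single vanishing determinant.
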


This result comes from simply following the calculations necessary to obtain the determinant of $\Gamma-\lambda I$ and thus we omit the proof. However, a more interesting result is that in defining $\Upsilon_d$ to extend the intuition cemented in $\Upsilon_3$, we obtain that the general characteristic polynomial of $\Gamma_d$ is given by $p_d(\lambda)=(1-\lambda)^{d-2}\left(\lambda^2 - 2\left( 1-\tfrac{d}{N} \right)\lambda + 1\right)$ and thus the eigenvalues, for any $d$, are $1$, $e^{iT}$, and $e^{-iT}$, with $\cos(T)=1-\tfrac{d}{N}$. This precisely recreates the binary case of Section $3$, as shown below. Recall that $\sin(t)=\tfrac{1}{\sqrt{N}}$ and we derived the binary roots to be $e^{\pm2it}$.
  \begin{align*}\begin{split}
e^{\pm iT} = \exp\left(i \pm \arccos\left(1-\tfrac{2}{n}\right)\right) &= \cos\arccos\left(1-\tfrac{2}{N}\right) \pm i \sin\arccos\left(1-\tfrac{2}{N}\right) \\
&= 1-\tfrac{2}{N} \pm i \sqrt{1-\left(1-\tfrac{2}{N}\right)^2} \\
&= 1-\tfrac{2}{N} \pm i \tfrac{2}{\sqrt{N}}\sqrt{1-\tfrac{1}{n}} \\
&= \cos(2t) \pm i\sin(t)\sqrt{\cos^2(t)} \\
&= \cos(2t) \pm i\sin(2t) = e^{\pm2it}
  \end{split}\end{align*}
We refer to this generalised result later.

\begin{lemma}\label{lemma_32}
The eigenvectors of $\Gamma_3$ are the column vectors of the matrix $Q_k$, given as follows.
  \begin{align*}
Q_k  = \Sigma Q = \left(\begin{matrix} 1 & 0 & 0 \\ 0 & \omega^k & 0 \\ 0 & 0 & 1 \end{matrix}\right)\left(\begin{matrix} 0 & 1 & 1 \\  \omega^2 & \frac{1-e^{iT}}{2(\omega^2-1)\sin{t}} & \frac{1-e^{-iT}}{2(\omega^2-1)\sin{t}} \\ \omega & \frac{1-e^{iT}}{2(\omega-1)\sin{t}} & \frac{1-e^{-iT}}{2(\omega-1)\sin{t}} \end{matrix}\right) 
  \end{align*}
\end{lemma}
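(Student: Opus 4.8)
The plan is to verify the claim directly: show that each column of $Q_k$ is an eigenvector of $\Gamma_3 = \Delta_3\Upsilon_3$ for the corresponding eigenvalue supplied by Lemma~\ref{lemma_31}, i.e. that $\Gamma_3 Q_k = Q_k\,\mathrm{diag}(1,e^{iT},e^{-iT})$. Before computing I would strip out the parameter $k$. Every off-diagonal entry of $\Gamma_3$ that touches the second coordinate carries a phase $\omega^{\pm 2k}$, while the diagonal and the remaining off-diagonal entries carry none; this is exactly the pattern produced by conjugating with a diagonal matrix whose entries are powers of $\omega$. Hence $\Gamma_3$ is similar, via such a matrix $\Sigma$, to its value $\Gamma_3^{(0)}$ at $k=0$, and $Q_k$ is obtained from the eigenvector matrix $Q$ of $\Gamma_3^{(0)}$ by this diagonal change of basis. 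It therefore suffices to prove that the columns of $Q$ are eigenvectors of $\Gamma_3^{(0)}$ with eigenvalues $1$, $e^{iT}$, $e^{-iT}$ respectively. Equivalently, one can keep $k$ general throughout and simply watch the $\omega^{k}$-factors of $\Sigma$ absorb the $\omega^{\pm 2k}$-factors of $\Gamma_3$ column by column, using $\omega^{3}=1$.

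For the eigenvalue $1$ I would substitute the first column $(0,\omega^{2},\omega)^{T}$ of $Q$ into $\Gamma_3^{(0)}$ and check that it is fixed. This is a three-line computation relying only on $\omega^{3}=1$ (so $\omega^{4}=\omega$) and $1+\omega+\omega^{2}=0$: in the first row the two $\tfrac{1}{\sqrt N}$-terms cancel to $0$, and in the second and third rows the $\tfrac{1}{N}$-terms cancel in pairs, leaving $(0,\omega^{2},\omega)^{T}$ unchanged.

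For the eigenvalues $e^{\pm iT}$, write the candidate vectors as $v_{\pm}=(1,\,b_{\pm},\,c_{\pm})^{T}$ with $b_{\pm}=\tfrac{1-e^{\pm iT}}{2(\omega^{2}-1)\sin t}$, $c_{\pm}=\tfrac{1-e^{\pm iT}}{2(\omega-1)\sin t}$ and $\sin t = 1/\sqrt N$, and verify $\Gamma_3^{(0)}v_{\pm}=e^{\pm iT}v_{\pm}$ one row at a time. The first row holds as an identity in the eigenvalue: the two contributions are $\tfrac{1-\omega^{2}}{\omega^{2}-1}\cdot\tfrac{1-e^{\pm iT}}{2}$ and $\tfrac{1-\omega}{\omega-1}\cdot\tfrac{1-e^{\pm iT}}{2}$, each equal to $-\tfrac{1-e^{\pm iT}}{2}$, so row $1$ evaluates to $1-(1-e^{\pm iT})=e^{\pm iT}$, as required. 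The second and third rows are where the eigenvalue gets pinned down: after clearing the $\sqrt N$ denominators and using the cyclotomic identities $\omega^{2}-1=-\omega^{2}(\omega-1)$ and $(\omega-1)^{2}=-3\omega$, each row-equation collapses to $\lambda^{2}-2\bigl(1-\tfrac{3}{N}\bigr)\lambda+1=0$ with $\lambda=e^{\pm iT}$, which is satisfied by Lemma~\ref{lemma_31} (equivalently because $e^{iT}+e^{-iT}=2\cos T=2(1-3/N)$ and $e^{iT}e^{-iT}=1$). Rows $2$ and $3$ yield the same equation, consistently, since $b_{\pm}=-\omega\,c_{\pm}$.

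I expect the only real obstacle to be the bookkeeping in this last step: tracking how the $\sqrt N$, $\sin t=1/\sqrt N$, $\cos T=1-3/N$ and $\omega$ factors interact so that the row-$2$ and row-$3$ equations genuinely degenerate to the characteristic polynomial of Lemma~\ref{lemma_31}. The diagonal conjugation that removes $k$, the eigenvalue-$1$ verification, and the identity-in-$\lambda$ check of row $1$ are all routine. I would close by noting that the three columns of $Q_k$ are linearly independent, so $Q_k$ is invertible and $\Gamma_3 = Q_k\,\mathrm{diag}(1,e^{iT},e^{-iT})\,Q_k^{-1}$ — the diagonalisation that the subsequent analysis of $r$ applications of $\Gamma_3$ relies on.
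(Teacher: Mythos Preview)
Your proposal is correct and follows the same overall architecture as the paper's proof: conjugate by the diagonal $\Sigma$ to strip out $k$, then verify each column of $Q$ against $\Gamma' = \Gamma_3^{(0)}$ directly. The first-column check and the row-$1$ check for $e^{\pm iT}$ match the paper's treatment almost verbatim.

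The one substantive difference is in how you finish rows $2$ and $3$. The paper reduces these to the scalar identity
\[
6e^{\pm iT}\sin^2(t) + (1 - e^{\pm iT}) \;=\; e^{\pm iT}\bigl(1 - e^{\pm iT}\bigr),
\]
and then verifies it by brute force: expanding $e^{iT}$ and $(e^{iT})^2$ explicitly in terms of $N$ via $\cos T = 1 - 3/N$ and comparing both sides. You instead recognise that, after the cyclotomic simplifications $(\omega-1)(\omega^2-1)=3$ and $(\omega^2-1)/(\omega-1)=-\omega^2$, the row equations collapse to $\lambda^2 - 2(1-3/N)\lambda + 1 = 0$, which is exactly the quadratic factor of the characteristic polynomial already established in Lemma~\ref{lemma_31}. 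This is the same identity in disguise---rearranging the paper's scalar equation with $\sin^2 t = 1/N$ gives precisely that quadratic---but your route is cleaner and more conceptual: it invokes a result already proved rather than re-deriving it through an unmotivated expansion in $N$. Your closing remark on invertibility of $Q_k$ is also a sensible addition that the paper leaves implicit.
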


  \begin{proof}
  We have $\Gamma = \Sigma \Gamma' \Sigma^{-1}$, where $\Sigma$ is given above, and $\Gamma'$ is as follows.
  \[
\Gamma' = \left(\begin{matrix} 1 & \frac{1-\omega^2}{\sqrt{N}} & \frac{1-\omega}{\sqrt{N}} \\ \frac{\omega-1}{\sqrt{N}} & 1-\frac{3}{N} & \frac{3\omega}{N}\\ \frac{\omega^2-1}{\sqrt{N}} & \frac{3\omega^2}{N} & 1 - \frac{3}{N} \end{matrix}\right)
  \]
  Therefore, if the eignevectors of $\Gamma'$ are the columns of $Q$, then the eigenvectors of $\Gamma$ are the columns of $Q_k = \Sigma Q$. We thus verify that the columns of $Q$ are indeed the eigenvectors of $\Gamma'$
  
The first eigenvector is easy to extract and verify. To find the others, one must solve the homogeneous linear systems $(\Gamma'-e^{\pm iT}I)x=0$. As we follow solely the ternary case, we omit the solving of these systems and proceed to prove that the eigenvectors are as such. We follow the proof for the second eigenvector, with proof of the third being almost identical. 
  \begin{align*}
\left(\begin{matrix} 1 & \frac{1-\omega^2}{\sqrt{N}} & \frac{1-\omega}{\sqrt{N}} \\ \frac{\omega-1}{\sqrt{N}} & 1-\frac{3}{N} & \frac{3\omega}{N}\\ \frac{\omega^2-1}{\sqrt{N}} & \frac{3\omega^2}{N} & 1 - \frac{3}{N} \end{matrix}\right)\left(\begin{matrix} 1 \\ \frac{1-e^{iT}}{2(\omega^2-1)\sin{t}} \\ \frac{1-e^{iT}}{2(\omega-1)\sin{t}} \end{matrix}\right) &= \left(\begin{matrix} 1 + \frac{(1-\omega^2)\sin(t)(1-e^{iT})}{2(\omega^2-1)\sin(t)} +  \frac{(1-\omega)\sin(t)(1-e^{iT})}{2(\omega-1)\sin(t)} \\ (\omega-1)\sin(t) + \frac{(1-3\sin^2(t))(1-e^{iT})}{2(\omega^2-1)\sin(t)} + \frac{3\omega\sin^2(t)(1-e^{iT})}{2(\omega-1)\sin(t)} \\ (\omega^2-1)\sin(t) + \frac{3\omega^2\sin^2(t)(1-e^{iT})}{2(\omega^2-1)\sin(t)}  + \frac{(1-3\sin^2(t))(1-e^{iT})}{2(\omega-1)\sin(t)}\end{matrix}\right) &
  \end{align*}

Verifying that the first entry details multiplication by $\lambda=e^{iT}$ is easy, but the second and third entries require more work. We prove the former, with the latter following the same derivation.
  \begin{align*}\begin{split}
&(\omega-1)\sin(t) + \frac{(1-3\sin^2(t))(1-e^{iT})}{2(\omega^2-1)\sin(t)} + \frac{3\omega\sin^2(t)(1-e^{iT})}{2(\omega-1)\sin(t)} \\
=& \frac{2(\omega^2-1)(\omega-1)\sin^2(t)}{2(\omega^2-1)\sin(t)} + \frac{(1-3\sin^2(t))(1-e^{iT})}{2(\omega^2-1)\sin(t)} - \frac{3\sin^2(t)(1-e^{iT})}{2(\omega^2-1)\sin(t)} \\
=& \frac{6\sin^2(t) + 1 - e^{iT} - 3\sin^2(t) + 3e^{iT}\sin^2(t) - 3\sin^2(t) + 3e^{iT}\sin^2(t)}{2(\omega^2-1)\sin(t)} \\
=& \frac{6e^{iT}\sin^2(t) + (1-e^{iT})}{2(\omega^2-1)\sin(t)}
  \end{split}\end{align*}

We conclude equivalence by demonstrating that $6e^{iT}\sin^2(t) + (1-e^{iT}) = e^{iT} \cdot (1-e^{iT})$, remembering that we defined $\cos(T)=1-\tfrac{3}{N}$ and $\sin(t)=\tfrac{1}{\sqrt{N}}$. 
  \begin{align*}\begin{split}
e^{iT} &= \cos\arccos\left(1-\tfrac{3}{N}\right) + i \sin\arccos\left(1-\tfrac{3}{N}\right) \\
&= 1-\tfrac{3}{N} + i\sqrt{1-\left(1-\tfrac{3}{N}\right)^2} \\ 
&= 1-\tfrac{3}{N} - \sqrt{\tfrac{9}{N^2} - \tfrac{6}{N}}
  \end{split}\end{align*}
  \begin{align*}\begin{split}
(e^{iT})^2 &= \cos(2T) + i\sin(2T) = 2\cos^2(T)-1 + 2i\sin(T)\cos(T) \\
&= 2\left(1-\tfrac{3}{N}\right)^2-1 + 2i\sqrt{1-\left(1-\tfrac{3}{N}\right)^2}\left(1-\tfrac{3}{N}\right) \\
&= 1 - \tfrac{12}{N} + \tfrac{18}{N^2} - 2\sqrt{\tfrac{9}{N^2} - \tfrac{6}{N}}\left(1-\tfrac{3}{N}\right) 
  \end{split}\end{align*}
Denote $\xi=\sqrt{\tfrac{9}{N^2} - \tfrac{6}{N}}$.
  \begin{align*}\begin{split}
e^{iT} \cdot \left(1-e^{iT}\right) = e^{iT} - (e^{iT})^2 &= 1 - \tfrac{3}{N} - \xi - 1 + \tfrac{12}{N} - \tfrac{18}{N^2} + 2\xi\left(1-\tfrac{3}{N}\right) \\
&= \tfrac{9}{N} - \tfrac{18}{N^2} + \xi\left(1-\tfrac{6}{N}\right) \\
6e^{iT}\sin^2(t) + (1-e^{iT}) &= \tfrac{6}{N}\left(1-\tfrac{3}{N}-\xi\right) + 1 - \left(1-\tfrac{3}{N}-\xi\right) \\
&= \tfrac{9}{N} - \tfrac{18}{N^2} + \xi\left(1-\tfrac{6}{N}\right) \\
&= e^{iT} \cdot (1-e^{iT})
  \end{split}\end{align*}
  \end{proof}

With the eigenvectors determined, we diagonalise $\Gamma = Q_k^{-1} \cdot D \cdot Q_k$, from which reasoning about the behaviour of $r$ iterations is simplified to $\Gamma^r = Q_k^{-1} \cdot D^r \cdot Q_k$, with the exponent considered solely apropos the diagonal $D$. $Q_k^{-1}$ is as follows; we omit proof.
  \begin{align*}
Q_k^{-1} = Q^{-1}\Sigma^{-1} = \left(\begin{matrix} 0 & \tfrac{1}{2}\omega & \tfrac{1}{2}\omega^2 \\ -\frac{1-e^{-iT}}{e^{-iT}-e^{iT}} & \frac{(\omega^2-1)\sin(t)}{e^{-iT}-e^{iT}} & \frac{(\omega-1)\sin(t)}{e^{-iT}-e^{iT}} \\ \frac{1-e^{iT}}{e^{-iT}-e^{iT}} & \frac{(1-\omega^2)\sin(t)}{e^{-iT}-e^{iT}} & \frac{(1-\omega)\sin(t)}{e^{-iT}-e^{iT}} \end{matrix}\right) \left(\begin{matrix} 1 & 0 & 0 \\ 0 & \omega^{-k} & 0 \\ 0 & 0 & 1 \end{matrix}\right)^{-1}
  \end{align*}

Nearing our conclusion, we proceed to consider the behaviour of $r$ Grover iterations on the system initially set to state $\ket{s_0}$, measured with respect to $\tau$. In determining the probability of the measurement yielding $\tau$, we project the system onto the subspace spanned by $\tau$ and square the amplitude. Overall, this is encapsulated through the following.
  \begin{align*}
\left|\bramatket{\tau}{\Gamma^r}{s_0}\right|^2 = \left| \left(\begin{matrix} \braket{\tau}{\tau} \\ \braket{\tau}{s_0} \\ \braket{\tau}{s_2} \end{matrix}\right)^T Q \left(\begin{matrix} 1 & 0 & 0 \\ 0 & e^{irT} & 0 \\ 0 & 0 & e^{-irT} \end{matrix}\right) Q^{-1} \left(\begin{matrix} 0 \\ 0 \\ 1 \end{matrix}\right) \right|^2
  \end{align*}
The probability is thus as follows.
  \begin{align*}\begin{split}
&\left| \left(\begin{matrix} 1 \\ \frac{1}{\sqrt{N}} \\ \frac{\omega^{i-1}}{\sqrt{N}} \end{matrix}\right)^T \left(\begin{matrix} \frac{(1-\omega)\sigma\sin(rT)}{\sin(T)} \\ \omega\frac{1}{2}(1-\frac{\sin((r+1)T)-\sin(rT)}{\sin(T)}) \\ \frac{1}{2}(1+\frac{\sin((r+1)T)-\sin(rT)}{\sin(T)}) \end{matrix}\right) \right|^2  \\
= & \left| (1-\omega)\sin(t)\frac{\sin(rT)}{\sin(T)} + \frac{1}{2}(1-\omega)\sin(t)\frac{\sin((r+1)T)-\sin(rT)}{\sin(T)} + \frac{1}{2}(1+\omega)\sin(t) \right|^2
  \end{split}\end{align*}

We defined $\sin(t)=\tfrac{1}{\sqrt{N}}$ and $\cos(T)=1-\tfrac{3}{N}$; they are functions depending on the value of $N$ and thus invariant under $r$. Consequently, they can be treated as constants maximising the probability through varying $r$. Collecting all constants in the above, we simplify its expression to the following.
  \begin{align*}
\left| a \sin(rT) + b\left(\sin((r+1)T)-\sin(rT)\right) + c \right|^2
  \end{align*}
Assuming that $r$ does vary with $N$ - and not that it is simply a constant - we infer that for large $N$, $\sin\left((r+1)T\right) \approx \sin(rT)$, cancelling the second term. In maximising what remains, $\left|a\sin(rT) + c\right|^2$, it is clear to see that is achieved when $rT = \tfrac{\pi}{2}$, or equivalently $r=\tfrac{\pi}{2\text{arccos}(1-3/N)}$. We now address the rate of growth of this function.

\begin{theorem} \label{theorem_3}
In the ternary case of the considered generalisation of Grover's algorithm, $r=O(\sqrt{N})$.
\end{theorem}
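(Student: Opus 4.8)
The plan is to mirror the closing step of the binary analysis. The preceding discussion has already reduced the optimal number of iterations to the closed form $r = \frac{\pi}{2\arccos(1-3/N)}$, so all that remains is to pin down the rate of growth of $\arccos(1-3/N)$ as $N \to \infty$. Write $T = \arccos(1-3/N)$ and observe that $1 - 3/N \to 1^{-}$ as $N$ grows, hence $T \to 0^{+}$; this is exactly the regime where the small-angle approximation is sharp, just as $t = \arcsin(1/\sqrt N) \to 0$ was in Section 3.

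First I would expand $\cos$ about the origin, $\cos(T) = 1 - \tfrac12 T^2 + O(T^4)$, and substitute $\cos(T) = 1 - 3/N$. Rearranging gives $\tfrac12 T^2 = \tfrac{3}{N} + O(T^4)$, so that $T = \sqrt{6/N}\,(1 + o(1))$ as $N \to \infty$; equivalently one may simply invoke the standard asymptotic $\arccos(1-x) \sim \sqrt{2x}$ as $x \to 0^{+}$ with $x = 3/N$. Either route yields $\arccos(1-3/N) = \Theta(1/\sqrt N)$. Substituting back, $r = \frac{\pi}{2\arccos(1-3/N)} = \frac{\pi}{2\sqrt 6}\sqrt N\,(1+o(1)) = \Theta(\sqrt N)$, and in particular $r = O(\sqrt N)$, recovering the binary bound of the Grover Time-Complexity theorem up to the constant $\tfrac{1}{\sqrt 6}$ replacing $\tfrac14$.

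I do not expect any genuine obstacle here: the argument is a routine asymptotic estimate, and the ternary case differs from the binary one only in the constant appearing inside $\arccos$. The single point worth a sentence of care is the standing assumption, invoked just before the statement, that the optimiser $r$ actually grows with $N$ rather than stabilising at a constant; the computation above is precisely what vindicates that assumption, since $\arccos(1-3/N) \to 0$ forces $r \to \infty$, so the derivation is self-consistent. One may also note, as in the binary case, that $r$ must be rounded to an integer, which perturbs the success probability only by $O(1/N)$ and leaves the $O(\sqrt N)$ bound intact.
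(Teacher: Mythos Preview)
Your argument is correct, but the paper does not use the Taylor expansion of $\arccos$. Instead it shows directly that the ratio $r/\sqrt{N} = \frac{\pi}{2\sqrt{N}\arccos(1-3/N)}$ is bounded: it evaluates the ratio at the minimal admissible value $N=3$, obtaining $1/\sqrt{3}$, and then computes the derivative of the ratio with respect to $N$ and verifies it is negative for all $N \geq 3$, so the ratio is monotonically decreasing and never exceeds $1/\sqrt{3}$. Your route via the standard asymptotic $\arccos(1-x)\sim\sqrt{2x}$ is shorter and closer in spirit to the binary analysis of Section~3 (which likewise invoked a small-angle approximation), and it additionally delivers the sharp leading constant $\pi/(2\sqrt{6})$. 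The paper's monotonicity argument, on the other hand, gives an explicit uniform bound valid for every $N\geq 3$, not merely an asymptotic one.
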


  \begin{proof}
By definition, we have that $r=O(\sqrt{N})$ if and only if there exist positive constants $C,k$ such that $r \leq C \cdot \sqrt{N}$ for all $N \geq k$; or equivalently, that $\tfrac{r}{\sqrt{N}}$ is bounded for all $N$ greater than some positive constant. The minimum value of $N$ that we consider is when given a single qutrit, $N=3$, at which point the function $\tfrac{r}{\sqrt{N}}$ evaluates to $\tfrac{\pi}{2\sqrt{3}\arccos(1-3/3)} = \tfrac{1}{\sqrt{3}}$. The derivative of $\tfrac{r}{\sqrt{N}}$ with respect to $N$ is as follows.
  \begin{align*}
\frac{\text{d}}{\text{d}N} \frac{\pi}{2\sqrt{N}\arccos(1-\tfrac{3}{N})} = (-1) \cdot \frac{\pi}{2N\left(\arccos(1-\tfrac{3}{N})\right)^2} \cdot \left( \frac{\arccos(1-\tfrac{3}{N})}{2\sqrt{N}} + \frac{3\sqrt{N}}{\sqrt{\tfrac{6}{N} - \tfrac{9}{N^2}}}\right)
  \end{align*}
As $N\to\infty$, $\arccos\left(1-\tfrac{3}{N}\right) \to 0$, but always remains positive. Thus, all the terms in the above derivative - to the right of the $\times (-1)$ - are positive; subsequent multiplication through $-1$ makes the derivative is invariantly negative, and thus the function never increases from the initial boundary condition, $\tfrac{1}{\sqrt{3}}$. Moreover, the function itself is always positive - tending from $\tfrac{1}{\sqrt{3}}$ to $0$. It can thus be bounded and so we have that $r=O(\sqrt{N})$.
  \end{proof}

We conjecture how this result could generalise, before proceeding, for reasons that will become apparent. We have found that the time-complexity in ternary is given to be $\tfrac{\pi}{2T}$, in which $\cos(T)=1-\tfrac{3}{N}$ is derived from the characteristic polynomial of $\Gamma$. When we derived this, we highlighted that the general case characteristic polynomial would lead us to instead extract eigenvalues $e^{\pm iT}$ in which $\cos(T)=1-\tfrac{d}{N}$, a natural generalisation. Thus, we could conjecture that the general-case time complexity would be $\tfrac{\pi}{2T}$, in which $T$ is now defined in the general case. This would preserve the $O(\sqrt{N})$ nature of the complexity, and so generalise the above theorem.

Fig. (5) visualises the growth of this conjectured, general-case complexity. Fig. (5a) and (5b) demonstrate that the conjectured result, in the binary case, tends to equality with the result derived in Section 3. Fig. 5(c) visualises the conjectured complexity in the ternary and quinary cases, again in comparison to the binary result derived in Section 3, from which one can see that the general-case result is indeed $O(\sqrt{N})$.

\begin{figure}[h!]
  \centering
  \begin{subfigure}[t]{0.3\linewidth}
    \includegraphics[width=\linewidth]{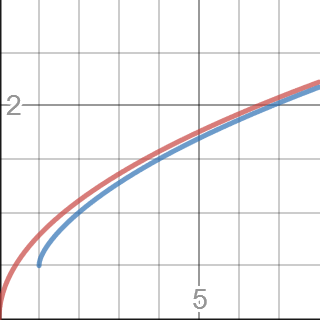}
    \centering
    \caption{Graph plotting $y=\tfrac{\pi}{4}\sqrt{x}$ (red) and $y=\tfrac{\pi}{2\arccos(1-2/x)}$ (blue), over an initial, small range.}
  \end{subfigure}
  \begin{subfigure}[t]{0.3\linewidth}
    \includegraphics[width=\linewidth]{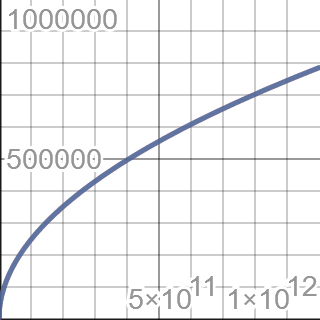}
    \centering
    \caption{Graph plotting $y=\tfrac{\pi}{4}\sqrt{x}$ (red) and $y=\tfrac{\pi}{2\arccos(1-2/x)}$ (blue) tending to equality. They overlap almost precisely.}
  \end{subfigure}
  \begin{subfigure}[t]{0.3\linewidth}
    \includegraphics[width=\linewidth]{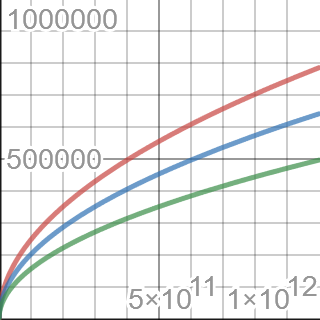}
    \centering
    \caption{Graph plotting $y=\tfrac{\pi}{4}\sqrt{x}$ (red), $y=\tfrac{\pi}{2\arccos(1-3/x)}$ (blue), and $y=\tfrac{\pi}{2\arccos(1-5/x)}$ (green).}
  \end{subfigure}
  \caption{Three graphs demonstrating the growth of function $y=\tfrac{\pi}{2\arccos(1-d/x)}$ for various $d$.}
  \label{fig:plots}
\end{figure}

  \subsection{General Behavioural Analysis}

  \subsubsection{Foreword}
Before proceeding, we discuss the necessity for a novel tactic. Our ternary proof followed the same structure as the binary and it seems natural to assume that one could further generalise this method; in which case, the desired number of iterations would be $r=\tfrac{\pi}{2T}$, $\cos(T)=1-\tfrac{d}{N}$. However, our progress in such adaptations is halted. Recall that we stated the characteristic polynomial of $\Gamma_d$ to be $p_\lambda(\Gamma)=\left(\lambda-1\right)^{d-2}\left(\lambda^2-2\left(1-\tfrac{d}{N}\right)+1\right)$, with eigenvalues $1$ and $e^{\pm iT}$, $\cos(T)=1-\tfrac{d}{N}$. In the ternary case, deriving the $e^{\pm iT}$-eigenvectors, necessary for diagonalisation, requires proof of the following equality.
  \begin{align*}
6e^{\pm iT}\sin^2(t)+(1-e^{\pm iT})=e^{iT}(1-e^{\pm iT})
  \end{align*}
In the general case, the $6$ coefficient is replaced with $(d-1)(\omega^k-1)(\omega^{-k}-1)=2(d-1)\left(1-\cos\left(\tfrac{2\pi k}{d}\right)\right)$. Hence, when $d=2$ or $3$, the solution is easy to extract regardless of the value of $k$, but a general solution cannot be determined and we cannot readily proceed.

  \subsubsection{Synthetic Analysis}
Thus, changing tactics, we now prove that the number of iterations does remain $O(\sqrt{N})$, for any $d$. Instead of diagonalsing the matrix $\Gamma$, we simply approximate it per $\Gamma^r = \exp( r/\sqrt{N} \Phi)$, for a matrix $\Phi$ that we discuss overleaf. In particular, this allows us to compute the probability of measuring $\ket{\tau}$ after $r = O(\sqrt{N})$ iterations.

\begin{theorem} \label{theorem_d-ary_Grover}
Grover's algorithm finds the unique preimage to a given output of a function, with high probability, in time $O(\sqrt{N})$ in the $d$-ary case for all $d$.
\end{theorem}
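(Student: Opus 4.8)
The plan is to work inside the $d$-dimensional invariant Grover subspace $G_d$ of Theorem~\ref{theorem_2}, where $\Gamma_d=\Delta_d\Upsilon_d$ is a $d\times d$ matrix, and to exploit the feature already visible in the explicit $\Gamma_3$ of Section~4.3: writing $N=d^n$, the operator $\Gamma_d$ is an $O(1/\sqrt N)$ perturbation of the identity. Concretely, passing to an orthonormal basis of $G_d$ by Gram--Schmidt --- which replaces $\ket{\tau}$ by a unit vector $\ket{\tau'}=\ket{\tau}+O(1/\sqrt N)$ orthogonal to the orthonormal family $\{\ket{s_i}\}_{i\neq 1}$ (these are orthonormal since the $\ket{f_i}$ are columns of a unitary) --- one has
\[
\Gamma_d\big|_{G_d}=I+\tfrac{1}{\sqrt N}\,\Phi+\tfrac{1}{N}\,R ,
\]
with $R$ bounded uniformly in $n$ for each fixed $d$, and with, by Lemmas~\ref{lemma_21} and~\ref{lemma_22} together with the definition of $\Upsilon_d$ extending $\Upsilon_3$, the nonzero entries of $\Phi$ confined to the $\ket{\tau'}$ row and the $\ket{\tau'}$ column. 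Since $\Gamma_d|_{G_d}$ is unitary, $\Phi$ is anti-Hermitian, hence of the rotation-generator form $\Phi=\ket{v}\bra{\tau'}-\ket{\tau'}\bra{v}$ where $\ket{v}=\Phi\ket{\tau'}=\sum_{i\in[0,d-1]\setminus\{1\}}\epsilon_i(\omega^{i-1}-1)\ket{s_i}$ with $|\epsilon_i|=1$. As $\sum_{i\in[0,d-1]\setminus\{1\}}|\omega^{i-1}-1|^2=\sum_{j=0}^{d-1}|\omega^{j}-1|^2=2d$, the norm $\|v\|=\sqrt{2d}$ is independent of $n$.

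Next I would make the replacement $\Gamma_d^{\,r}=\exp(\tfrac{r}{\sqrt N}\Phi)$ rigorous. Being within $O(1/\sqrt N)$ of the identity, $\Gamma_d|_{G_d}$ has a principal logarithm, with $\log\Gamma_d=\tfrac{1}{\sqrt N}\Phi+\tfrac{1}{N}\big(R-\tfrac12\Phi^2\big)+O(N^{-3/2})$, so that
\[
\Gamma_d^{\,r}=\exp\!\Big(\tfrac{r}{\sqrt N}\,\Phi+\tfrac{r}{N}\big(R-\tfrac12\Phi^2\big)+O\big(rN^{-3/2}\big)\Big) .
\]
Taking $r=\lfloor c\sqrt N\rfloor$ --- the floor costing only an additional $O(1/\sqrt N)$ in operator norm --- makes the second and further exponent terms $O(1/\sqrt N)$, so by continuity of the matrix exponential $\Gamma_d^{\,r}=e^{c\Phi}+O(1/\sqrt N)$, uniformly on $G_d$ for each fixed $d$.

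Finally I would read off the measurement probability. As $\Phi=\ket{v}\bra{\tau'}-\ket{\tau'}\bra{v}$ with $\ket{v}\perp\ket{\tau'}$, the matrix $e^{c\Phi}$ is a planar rotation by angle $c\|v\|=c\sqrt{2d}$ on $\operatorname{span}\{\ket{\tau'},\widehat{v}\}$ and the identity on the orthogonal complement; decomposing $\ket{s_0}=\braket{\widehat{v}}{s_0}\,\widehat{v}+\ket{s_0^\perp}$ and using $\ket{\tau}=\ket{\tau'}+O(1/\sqrt N)$ yields
\[
\big|\bramatket{\tau}{\Gamma_d^{\,r}}{s_0}\big|^2=\big|\braket{\widehat{v}}{s_0}\big|^2\sin^2\!\big(c\sqrt{2d}\big)+O\big(1/\sqrt N\big),\qquad \big|\braket{\widehat{v}}{s_0}\big|^2=\frac{|\omega^{-1}-1|^2}{2d}=\frac{1-\cos(2\pi/d)}{d} .
\]
Choosing $c=\pi/(2\sqrt{2d})$, that is $r=\tfrac{\pi}{2\sqrt{2d}}\sqrt N=O(\sqrt N)$, gives $\sin^2(c\sqrt{2d})=1$, so one run of $r$ Grover iterations returns $\tau$ with probability at least the positive constant $\tfrac{1-\cos(2\pi/d)}{d}-o(1)$ (equal to $1$ when $d=2$, recovering Section~\ref{section_Grover}); repeating the procedure a number of times independent of $N$ then drives the success probability arbitrarily close to $1$ with a total of $O(\sqrt N)$ function calls, which proves the theorem. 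As a consistency check this $r$ matches, asymptotically, the conjectured $\tfrac{\pi}{2T}$ with $\cos T=1-\tfrac{d}{N}$ of Lemma~\ref{lemma_31}, since then $T\sim\sqrt{2d/N}=\|v\|/\sqrt N$.

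The step I expect to be the main obstacle is the first: producing the explicit $d$-ary matrix $\Gamma_d$ extending $\Gamma_3$ --- pinning down $\Upsilon_d$ for all $d$ and checking that every coupling between distinct $\ket{s_i},\ket{s_j}$ and every correction along the diagonal is genuinely $O(1/N)$, so that $\Phi$ carries exactly the rank-$2$ rotational structure above with $\|v\|=\sqrt{2d}$. Once that is in place the exponential estimate and the probability computation are routine; it is precisely this $d$-uniform treatment of $\Phi$, rather than the $d$-dependent diagonalisation obstructed at the start of this subsection, that makes the argument go through.
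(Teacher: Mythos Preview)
Your proposal is correct and follows essentially the same strategy as the paper: write $\Gamma_d = I + N^{-1/2}\Phi + N^{-1}\Psi$ on the Grover subspace, approximate $\Gamma_d^{\,r}$ by $\exp(\rho\Phi)$ for $r=\rho\sqrt{N}$, and extract the success probability $\tfrac{1-\cos(2\pi/d)}{d}\sin^2(\rho\sqrt{2d})$, maximised at $\rho=\pi/(2\sqrt{2d})$; the paper even writes down exactly the general $\Upsilon_d$ and $\Phi$ you anticipate as the ``main obstacle,'' and the cross-terms are indeed all $O(1/N)$.

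Where you differ is in packaging, and your version is in two respects tighter than the paper's. First, the paper justifies $\Gamma^r\approx\exp(r\Omega)$ only by the heuristic $(I+\tfrac{1}{r}(r\Omega))^r\approx e^{r\Omega}$, whereas your matrix-logarithm argument actually controls the error as $O(1/\sqrt{N})$. Second, the paper computes $\exp(\rho\Phi)$ by establishing $\Phi^{2k}=\mathrm{diag}((-2d)^k,M^k)$ and $\Phi^{2k+1}=(-2d)^k\Phi$ and summing the series entrywise (its Lemmas~\ref{lemma_phi}--\ref{lemma_xi}); you instead pass to an orthonormal basis, read off from unitarity that $\Phi$ is anti-Hermitian of the form $\ket{v}\bra{\tau'}-\ket{\tau'}\bra{v}$, and recognise $e^{\rho\Phi}$ as a planar rotation by $\rho\|v\|=\rho\sqrt{2d}$. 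Both routes land on the same numbers, but yours explains \emph{why} the constant $2d$ appears (it is $\|v\|^2=\sum_{j}|\omega^{j}-1|^2$) and avoids the explicit recursion on powers of $\Phi$.
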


The rest of the section is devoted to the proof of Theorem \ref{theorem_d-ary_Grover}. We first consider the aformentioned approximation of $\Gamma$, for which we must recall $\Delta$ from Lemma \ref{lemma_21} and define $\Upsilon$ in the general case.
  \begin{align*}
\Delta = \left(\begin{matrix} 1 & 0 & 0 & & 0 \\
\omega^{-2k}\tfrac{\omega-1}{\sqrt{N}} & \omega & 0 & & 0 \\
\omega^{-3k}\tfrac{\omega^2-1}{\sqrt{N}} & 0 & \omega^2 & & 0 \\
\vdots & \vdots & & \ddots & \vdots \\
\tfrac{\omega^{-1}-1}{\sqrt{N}} & 0 & 0 & \ldots & \omega^{d-1} \end{matrix}\right),\ 
\Upsilon = \left(\begin{matrix} 1 & \omega^{2k}\tfrac{1-\omega^{-1}}{\sqrt{N}} & \omega^{3k}\tfrac{1-\omega^{-2}}{\sqrt{N}} & & \tfrac{1-\omega}{\sqrt{N}} \\
0 & \omega^{d-1} & 0 & \ldots & 0 \\
0 & 0 & \omega^{d-2} & & 0 \\
\vdots & \vdots & & \ddots & \vdots \\
0 & 0 & 0 & \ldots & \omega \end{matrix}\right)
  \end{align*}
Thus we can determine the matrix $\Gamma=\Delta\Upsilon$, in which $a_i$ and $b_{i,j}$ denote constants with respect to $N$.
  \begin{align*}
\Gamma=\left(\begin{matrix} 1 & \omega^{2k}\tfrac{1-\omega^{-1}}{\sqrt{N}} & \omega^{3k}\tfrac{1-\omega^{-2}}{\sqrt{N}} & & \tfrac{1-\omega}{\sqrt{N}} \\
\omega^{-2k}\tfrac{\omega-1}{\sqrt{N}} & 1+ \tfrac{1}{N}a_{1} & \tfrac{1}{N}b_{1,2} & \ldots & \tfrac{1}{N}b_{1,d-1} \\
\omega^{-3k}\tfrac{\omega^2-1}{\sqrt{N}} & \tfrac{1}{N}b_{2,1} & 1+\tfrac{1}{N}a_{2} & & \tfrac{1}{N}b_{2,d-1} \\
\vdots & \vdots & & \ddots & \vdots \\
\tfrac{\omega^{-1}-1}{\sqrt{N}} & \tfrac{1}{N}b_{d-1,1} & \tfrac{1}{N}b_{d-1,2} & \ldots & 1 + \tfrac{1}{N}a_{d-1} \end{matrix}\right)
  \end{align*}
We see that $\Gamma$ can be expressed as the sum $\Gamma = I + \Omega = I + \frac{1}{\sqrt{N}} \Phi + \frac{1}{N} \Psi$, in which $\Psi$ and the previously noted $\Phi$ are as follows.
  \begin{align*}
\Phi&=\left(\begin{matrix} 0 & \omega^{2k} (1-\omega^{-1}) & \omega^{3k}(1-\omega^{-2}) & & 1-\omega \\
\omega^{-2k}(\omega-1) & 0 & 0 & \ldots & 0 \\
\omega^{-3k}(\omega^2-1) & 0 & 0 & & 0 \\
\vdots & \vdots & & \ddots & \vdots \\
\omega^{-1}-1 & 0 & 0 & \ldots & 0 \end{matrix}\right)\\
\Psi&=\left(\begin{matrix} 0 & 0 & 0 & & 0 \\
0 & a_{1} & b_{1,2} & \ldots & b_{1,d-1} \\
0 & b_{2,1} & a_{2} & & b_{2,d-1} \\
\vdots & \vdots & & \ddots & \vdots \\
0 & b_{d-1,1} & b_{d-1,2} & \ldots & a_{d-1} \end{matrix}\right)
  \end{align*}


We consider $r$ applications of $\Gamma$ upon the uniformly-distributed starting state $\ket{s}$. For $r$ large, say $r = \rho \sqrt{N}$ for some $\rho > 0$, we can approximate $\Gamma^r$ as follows.
\[
    \Gamma^r = \left( I + \frac{1}{r}(r \Omega) \right)^r \approx \exp(r\Omega) = \exp \left( \rho \Phi + \frac{\rho}{\sqrt{N}} \Psi \right) \approx \exp(\rho \Phi).
\]


  \begin{lemma} \label{lemma_phi}
For all $k\geq 0$ we have $\Phi^{2k}=\left(\begin{matrix} (-2d)^k & 0 \\ 0 & M^k \end{matrix}\right)$ for some $(d-1)\times(d-1)$ matrix $M$ and $\Phi^{2k+1}=(-2d)^k\Phi$.
  \end{lemma}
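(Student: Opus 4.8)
The plan is to exploit the block anti-diagonal shape of $\Phi$: reading off its first row and first column, $\Phi = \left(\begin{matrix} 0 & v^{T} \\ u & 0 \end{matrix}\right)$ for vectors $u, v \in \mathbb{C}^{d-1}$, with all other entries zero. The key structural fact is that squaring this form gives $\Phi^{2} = \left(\begin{matrix} v^{T}u & 0 \\ 0 & uv^{T} \end{matrix}\right)$, so $\Phi^{2}$ is automatically block-diagonal; I would set $M := uv^{T}$ (a rank-one $(d-1)\times(d-1)$ matrix) and would then be left only to evaluate the scalar $v^{T}u$.

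That scalar is the one genuine computation. From the definition of $\Phi$, the $j$-th entry of $u$ times the $j$-th entry of $v$ equals $(\omega^{j}-1)(1-\omega^{-j}) = \omega^{j} + \omega^{-j} - 2$; note that the two phase factors $\omega^{\pm(j+1)k}$ cancel, so this is independent of the parameter $k$ recording the position of the $1$ in $\tau$. Summing, $v^{T}u = \sum_{j=1}^{d-1}(\omega^{j} + \omega^{-j} - 2)$. Since $\sum_{j=0}^{d-1}\omega^{j} = 0$ we have $\sum_{j=1}^{d-1}\omega^{j} = -1$, and since $j \mapsto d-j$ permutes $\{1,\dots,d-1\}$ also $\sum_{j=1}^{d-1}\omega^{-j} = -1$; hence $v^{T}u = -1 - 1 - 2(d-1) = -2d$, giving $\Phi^{2} = \left(\begin{matrix} -2d & 0 \\ 0 & M \end{matrix}\right)$.

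The two claimed identities then follow by a routine induction on the exponent. The even case uses $\Phi^{2(k+1)} = \Phi^{2k}\cdot\Phi^{2} = \left(\begin{matrix} (-2d)^{k} & 0 \\ 0 & M^{k} \end{matrix}\right)\left(\begin{matrix} -2d & 0 \\ 0 & M \end{matrix}\right)$, with base case $\Phi^{0} = I_{d}$. For the odd case, $\Phi^{2k+1} = \Phi^{2k}\cdot\Phi = \left(\begin{matrix} 0 & (-2d)^{k}v^{T} \\ M^{k}u & 0 \end{matrix}\right)$, and since $Mu = u(v^{T}u) = -2d\,u$ we get $M^{k}u = (-2d)^{k}u$, so $\Phi^{2k+1} = (-2d)^{k}\Phi$. (Equivalently, one first checks $\Phi^{3} = -2d\,\Phi$ and iterates.)

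I do not expect a real obstacle here: the whole lemma hinges on the single identity $v^{T}u = -2d$, and the only subtlety is noticing that the $k$-dependent phases cancel in pairs, after which what remains is an elementary roots-of-unity sum. Everything else is bookkeeping with $2\times 2$ block matrices.
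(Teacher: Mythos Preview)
Your proposal is correct and follows essentially the same approach as the paper: both compute $\Phi^{2}$ directly, obtain the $(0,0)$ entry $-2d$ via the identical roots-of-unity sum $\sum_{j=1}^{d-1}(\omega^{j}+\omega^{-j}-2)$ after the $k$-dependent phases cancel, and then induct on the exponent. Your explicit block decomposition $\Phi=\left(\begin{smallmatrix}0 & v^{T}\\ u & 0\end{smallmatrix}\right)$ and the observation $M=uv^{T}$, $Mu=(-2d)u$ make the odd case slightly cleaner than the paper's ``easy to verify for $k=1$'' step, but this is a presentational refinement rather than a different route.
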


  \begin{proof}
For even powers of $\Phi$, the claim trivially holds for $\Phi^0$. We now verify it for $\Phi^2$. Let $Y=\Phi^2$, then clearly $y_{0,j}=y_{j,0}=0$ $\forall 1 \leq j \leq d-1$.
  \begin{align*}
y_{0,0} &= \sum_{j=1}^{d-1} \omega^{-(j+1)\sigma} (\omega^j - 1) \omega^{(j+1) \sigma} (1 - \omega^{-j}) \\
&= \sum_{j=1}^{d-1} (\omega^j + \omega^{-j} - 2) \\
&= 2 \left( \sum_{j=1}^{d-1} \omega^j - (d-1)\right) \\
&= -2d
  \end{align*}
Then, $\forall k\geq 1$,
  \begin{align*}
\Phi^{2k}=(\Phi^2)^k = \left(\begin{matrix} -2d & 0 \\ 0 & M \end{matrix}\right)^k = \left(\begin{matrix}(-2d)^k & 0 \\ 0 & M^k \end{matrix}\right)
  \end{align*}
The proof for odd powers of $\Phi$ is by induction on $k$. It is trivial for $k=0$ and easy to verify for $k=1$ i.e. $\Phi^3$. Now, suppose the claim holds for $k-1$.
  \begin{align*}
\Phi^{2k+1}=\Phi^{2k-1}\Phi^2=\left((-2d)^{k-2}\Phi\right)\Phi^2=(-2d)^{k-2}\Phi^3=(-2d)^{k-2}\left((-2d)^2\Phi\right)=(-2d)^k\Phi
  \end{align*}
  \end{proof}

  \begin{lemma} \label{lemma_xi}
Letting $\Xi:=\exp( \rho \Phi)$, we have $\Xi_{0,d-1}=\tfrac{1}{\sqrt{2d}}(1-\omega)\sin( \rho \sqrt{2d})$.
  \end{lemma}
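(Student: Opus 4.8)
The plan is to compute the matrix exponential entrywise from its power series, feeding in the closed forms for the powers of $\Phi$ established in Lemma \ref{lemma_phi}. Concretely, write
\[
\Xi = \exp(\rho\Phi) = \sum_{m=0}^{\infty} \frac{\rho^m}{m!}\Phi^m = \sum_{k=0}^{\infty} \frac{\rho^{2k}}{(2k)!}\Phi^{2k} + \sum_{k=0}^{\infty} \frac{\rho^{2k+1}}{(2k+1)!}\Phi^{2k+1},
\]
and extract the $(0,d-1)$ entry of each summand. All series in sight converge absolutely, so this term-by-term manipulation is legitimate.

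First I would dispatch the even powers. By Lemma \ref{lemma_phi}, $\Phi^{2k}$ is block diagonal with a $1\times 1$ top-left block equal to $(-2d)^k$, so its first row is $\big((-2d)^k, 0, \dots, 0\big)$; since $d-1 \geq 1$, its $(0,d-1)$ entry is $0$ for every $d \geq 2$ (in particular this does not depend on the unknown block $M$, and it also covers the degenerate $1\times 1$ lower block occurring when $d=2$). Hence the even part of the series contributes nothing to $\Xi_{0,d-1}$.

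Next, the odd powers. Again by Lemma \ref{lemma_phi}, $\Phi^{2k+1} = (-2d)^k \Phi$, and reading the explicit matrix $\Phi$ gives $\Phi_{0,d-1} = 1-\omega$ (the top-right entry). So the $k$th odd term contributes $\tfrac{\rho^{2k+1}}{(2k+1)!}(-2d)^k(1-\omega)$, and therefore
\[
\Xi_{0,d-1} = (1-\omega)\sum_{k=0}^{\infty} \frac{(-2d)^k \rho^{2k+1}}{(2k+1)!}.
\]
Finally I would write $(-2d)^k = (-1)^k(\sqrt{2d})^{2k}$ and factor out $\tfrac{1}{\sqrt{2d}}$, turning the sum into $\tfrac{1}{\sqrt{2d}}\sum_{k\geq 0}(-1)^k\tfrac{(\rho\sqrt{2d})^{2k+1}}{(2k+1)!} = \tfrac{1}{\sqrt{2d}}\sin(\rho\sqrt{2d})$, which is exactly the claimed value.

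There is no real obstacle: the argument is pure bookkeeping on top of Lemma \ref{lemma_phi}. The only two points needing a moment's care are (i) checking that the even-power terms genuinely vanish irrespective of $M$ — they do, because we only ever touch the first row of $\Phi^{2k}$ — and (ii) correctly massaging the scalar so that $(2d)^k$ becomes $(\sqrt{2d})^{2k}$, lining the series up with the Taylor expansion of sine.
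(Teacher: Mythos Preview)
Your proof is correct and follows essentially the same approach as the paper: expand $\exp(\rho\Phi)$ as a power series, invoke Lemma~\ref{lemma_phi} to evaluate $(\Phi^m)_{0,d-1}$ for each $m$, and recognise the resulting odd-index series as the Taylor expansion of $\sin(\rho\sqrt{2d})$. The paper's version is terser---it passes directly from the series to the odd-term sum without pausing to argue that the even powers contribute zero---but your explicit treatment of that point and of the identification $\Phi_{0,d-1}=1-\omega$ only adds clarity, not a different idea.
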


  \begin{proof}
  By Lemma \ref{lemma_phi}
  \begin{align*}
\Xi_{0,d-1} &= \sum_{n=0}^\infty \rho^n \frac{\Phi_{0,d-1}^n}{n!} \\
&= \sum_{k=0}^\infty  \rho^{2k+1} \frac{(-2d)^k (1- \omega) }{(2k+1)!} \\
&= \frac{1}{\sqrt{2d}}(1-\omega) \sum_{k=0}^\infty \frac{(-1)^k (\rho \sqrt{2d})^{2k+1}}{(2k+1)!} \\
&= \frac{1}{\sqrt{2d}}(1-\omega)\sin( \rho \sqrt{2d})
  \end{align*}
  \end{proof}

Thus, according to Lemma \ref{lemma_xi}, the probability of measuring  $\ket{\tau}$, denoted as $P_d(\rho)$, is as follows, omitting any terms in $O\left(\tfrac{1}{\sqrt{N}}\right)$.
  \begin{align*}
P_d(\rho) = \left|\bramatket{\tau}{\Xi}{s}\right|^2 \approx \left| \Xi_{0,d-1} \right|^2 = \frac{1}{2d}\left|1-\omega\right|^2 \sin^2(\rho \sqrt{2d}) = \left\{ \frac{1}{d} - \frac{1}{d}\cos\left(\frac{2\pi}{d}\right) \right\} \sin^2(\rho \sqrt{2d})
  \end{align*}
This is maximised for $\hat{\rho} = \frac{\pi}{2\sqrt{2d}}$, thus yielding a maximal probability of $P_d(\hat{\rho}) = \frac{1}{d} - \frac{1}{d}\cos\left(\frac{2\pi}{d}\right)$.

How can we use such an algorithm? We see that $P_2(\hat{\rho}) \approx 1$, but for $d>2$, $P_d(\hat{\rho}) <1$. Thus, we must run the algorithm mutliple times. It is clear that, for each run, the probability of measuring $\ket{\tau'}$ for any $\tau' \neq \tau$ is $O(\tfrac{1}{\sqrt{N}})$. Recalling that $N=d^n$, this means that the probabilities of measuring any other $\tau'$ are far smaller than that of measuring $\tau$. Therefore, one can run Grover's algorithm multiple times until the same measurement is obtained twice; that repeated measurement would almost surely be $\ket{\tau}$. The expected number of runs is $\tfrac{2}{P_d(\rho)}$, which yields a total of calls to the oracle given as follows.
\[
    E(\rho) = \frac{2}{P_d(\rho)} \cdot \rho \sqrt{N} = \frac{4d}{|1 - \omega|^2} \cdot \frac{\rho}{\sin^2(\rho \sqrt{2d})} \cdot \sqrt{N}
\]
A simple exercise shows that this is minimised for $\rho^* = \frac{x}{\sqrt{2d}}$, where $x$ satisfies $x = \frac{1}{2} \tan x$. We obtain the slightly smaller expected number of calls
\[
  E(\rho^*) = \frac{\sqrt{2d}}{1 - \cos(2\pi/d)} \cdot 1.380 \cdot \sqrt{N}.  
\]
For instance, for $d=3$, we obtain $E(\rho^*) = 2.253 \sqrt{N}$, while $E(\hat{\rho}) = 2.565 \sqrt{N}$.

  \section{Conclusion} \label{section_conclusion}
We conclude having considered Grover's algorithm under the many-valued generalisation of the quantum circuit model and, subsequently, derived three main results: Theorem \ref{theorem_1} shows that our generalisation reflects the structure of the binary case; Theorem \ref{theorem_2} demonstrates that the algorithm remains invariant under a $d$-dimensional subspace, in which $d$ is the arity of the information and logic; Theorems \ref{theorem_3} and \ref{theorem_d-ary_Grover} highlight that the general case retains $O(\sqrt{N})$ time-complexity. Further, we show that higher arities require more runs of the algorithm, due to the diminishing (maximal) amplitude of the target state as $d\to\infty$, but that this expectation still remains $O(\sqrt{N})$.

Overall, this paper presents a detailed consideration of quantum algorithms under this framework, with strong results. Quantum information is more unstable than its classical counterpart and one contemporary objective in our engineering of such data is maintaining coherent states for extended periods of time. Thus, the potential for faster computation in higher-arity systems would allow quantum computers to perform more work in the same, limited time. Moreover, research into whether the circuit-generalisation considered in this paper always preserves semantics and complexity could provide insight into the nature of quantum gates and circuits, and how applications of the former transcend the arity in which they are embedded. A final question, centred  in complexity theory, could be to push this generalised notion even further and determine whether quantum classes like \texttt{BQP} are \textit{robust}, in that the class is invariant under the alphabet size of the quantum Turing machine, and whether this can be alternatively expressed with arity in the quantum circuit model.

\end{document}